\begin{document}

\bibliographystyle{apsrev4-2}

\title{Efficient Algorithms for Approximating Quantum Partition Functions at Low Temperature}

\author{Tyler Helmuth}
\email{tyler.helmuth@durham.ac.uk}
\homepage{http://www.tylerhelmuth.net}
\affiliation{Department of Mathematical Sciences, Durham University, Durham, DH1 3LE, United Kingdom}

\author{Ryan L. Mann}
\email{mail@ryanmann.org}
\homepage{http://www.ryanmann.org}
\affiliation{Centre for Quantum Computation and Communication Technology, Centre for Quantum Software and Information, School of Computer Science, Faculty of Engineering \& Information Technology, University of Technology Sydney, NSW 2007, Australia}
\affiliation{School of Mathematics, University of Bristol, Bristol, BS8 1UG, United Kingdom}

\begin{abstract}
    We establish an efficient approximation algorithm for the partition functions of a class of quantum spin systems at low temperature, which can be viewed as stable quantum perturbations of classical spin systems. Our algorithm is based on combining the contour representation of quantum spin systems of this type due to Borgs, Koteck\'y, and Ueltschi with the algorithmic framework developed by Helmuth, Perkins, and Regts, and Borgs et al.
\end{abstract}

\maketitle

\section{Introduction}
\label{section:Introduction}

The use of approximation algorithms for computing quantities of interest in statistical mechanical systems is an old subject. In this paper we are interested in the computational complexity of such algorithms for quantum spin systems, i.e., when do provably correct and efficient algorithms exist for quantum spin systems? For discrete classical statistical mechanics systems at high temperatures this is a well-studied question, and we have a relatively complete understanding for some models, e.g., the hard-core model~\cite{weitz2006counting, sly2010computational, sly2014counting, galanis2016inapproximability}, although important problems remain open~\cite{dyer2004relative}. Our understanding of approximation algorithms for quantum spin systems at high temperatures is less advanced, but has received a good deal of recent attention~\cite{bravyi2015monte, mann2019approximation, harrow2020classical, kuwahara2020clustering, crosson2020classical, mann2021efficient, galanis2022complexity}. 

Approximation algorithms at low temperatures are not nearly as well understood as at high temperatures. The main difficulty is caused by the existence of highly correlated phases at low temperatures. This difficulty is present in discrete classical statistical mechanics systems such as the Ising model and the hard-core model. While it has long been known that this is not a barrier for the classical Ising model~\cite{jerrum1993polynomial}, it is only relatively recently that this difficulty has been circumvented for the hard-core model. The main method used has been a recasting of low-temperature models into effective high-temperature models via polymer or contour representations~\cite{helmuth2020algorithmic, jenssen2019algorithms, liao2019counting, borgs2020efficient, carlson2020efficient}. However, there has also been success with other methods~\cite{barvinok2019weighted, huijben2023sampling}.

Our understanding of quantum systems at low temperatures is limited, in part due to the possibility of having spontaneously broken continuous symmetries: in such situations even truncated correlations can decay slowly. When this does not occur, however, an understanding has been developed in some cases via Peierls and Pirogov--Sinai methods~\cite{ginibre1969existence, kennedy1985long, borgs1996low, datta1996alow}. We shall be most interested in the latter methods, which apply more generally, e.g., for quantum perturbations of classical systems~\cite{borgs1996low, datta1996alow}. These methods yield control over low-temperature quantum spin systems in the region of the phase diagram where the quantum part of the Hamiltonian does not lead to new phenomena, and moreover, these methods can be used to study degeneracy-breaking quantum effects~\cite{datta1996alow, datta1996blow}.

In this paper we consider the development of approximate counting algorithms for low-temperature quantum spin systems. For the reasons discussed above, our focus shall be restricted to models where only discrete symmetries are broken. More precisely, we shall show how the version of Pirogov--Sinai theory developed for quantum spin systems in Ref.~\cite{borgs1996low} can be turned into an efficient approximate counting algorithm, following the broad strategy of Refs.~\cite{helmuth2020algorithmic, borgs2020efficient}. As for the investigation of the low-temperature phase diagram, our main motivation is to establish a rigorous classification of the computational phases of quantum spin systems. Our results show that the low-temperature phase of the quantum spin systems considered in this paper can be efficiently simulated in the sense of approximate counting using a classical algorithm. We now give an informal statement of our main result.

Our main result concerns stable quantum perturbations of classical spin systems. Deferring a full definition to Section~\ref{section:Preliminaries}, these spin systems can be informally introduced as quantum spin systems with Hamiltonians \mbox{$H=H_\Phi+\lambda H_\Psi$} where $H_\Phi$ is diagonal in a basis indexed by a classical spin system, $H_\Psi$ is local, and $\abs{\lambda}$ is small. The adjective stable implies that the set $\Xi$ of classical ground states (i.e., when $\lambda=0$) is also the set of ground states for small $\abs{\lambda}>0$. Let $Z^{g}_G(\beta,\lambda)$ be the partition function of such a model on a finite subgraph $G$ of $\mathbb{Z}^\nu$ with boundary condition $g\in\Xi$. Our main result may be stated informally as follows. For a formal statement, we refer to reader to Theorem~\ref{theorem:ApproximationAlgorithmPartitionFunction}.
\begin{theorem}[Informal]
    Let $G$ be a finite induced subgraph of $\mathbb{Z}^\nu$. There exists constants $\beta^\star$ and $\lambda^\star$, such that, for all $g\in\Xi$, all $\beta\geq\beta^\star$, and all $\abs{\lambda}\leq\lambda^\star$, there is an efficient algorithm for approximating the partition function $Z^{g}_G(\beta,\lambda)$.
\end{theorem}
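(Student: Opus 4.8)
The plan is to reduce the quantum partition function $Z_G^g(\beta,\lambda)$ to an abstract polymer model of the kind handled by the algorithmic framework of Refs.~\cite{helmuth2020algorithmic, borgs2020efficient}, and then to approximate the logarithm of the associated polymer partition function by a truncated cluster expansion. The two inputs are the quantum contour representation of Ref.~\cite{borgs1996low}, which rewrites $Z_G^g$ as a weighted sum over configurations of space-time contours living in $\mathbb{Z}^\nu\times[0,\beta]$, and the abstract cluster-expansion machinery, which converts a convergent polymer expansion into an efficient algorithm.

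First I would use the contour representation to factor $Z_G^g(\beta,\lambda)=w_g^{\abs{G}}\,\Xi_G^g$, where $w_g$ is the per-site weight of the ground state $g$ and $\Xi_G^g$ is the partition function of a polymer gas whose polymers are contours compatible with the boundary condition $g$. Because the system is a \emph{stable} perturbation, the free energies of all $g\in\Xi$ coincide to the relevant order and no competing phase needs to be tracked; this is what allows $\Xi_G^g$ to be treated as a genuine polymer partition function rather than forcing us to invoke the full Pirogov--Sinai truncated weights. The essential analytic input is a Peierls bound: the weight of a contour $\gamma$ should be bounded by $e^{-c\beta\abs{\gamma}}$, up to corrections of order $\abs{\lambda}$, where $\abs{\gamma}$ measures the size of its spatial support.

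Next I would verify a Kotecký--Preiss convergence condition for this polymer gas. The Peierls bound, together with the bound $C^m$ on the number of contours of size $m$ rooted at a given site of $\mathbb{Z}^\nu$, shows that the condition holds once $\beta\geq\beta^\star$ and $\abs{\lambda}\leq\lambda^\star$ for suitable constants. Convergence of the cluster expansion then expresses $\log\Xi_G^g$ as an absolutely convergent sum over clusters, with clusters of size $m$ contributing at order $e^{-c'\beta m}$. Truncating this sum at clusters of size $O(\log(\abs{G}/\epsilon))$ yields an additive $\epsilon$-approximation to $\log\Xi_G^g$, and hence, after accounting for the explicit prefactor $w_g^{\abs{G}}$, a multiplicative approximation to $Z_G^g$. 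The algorithm enumerates all clusters up to this size, of which there are at most polynomially many for fixed $\nu$, and sums their weights.

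The main obstacle I expect is the efficient and accurate computation of the individual quantum contour weights. Unlike in the classical setting, a contour weight in this representation is not a simple combinatorial quantity: it is a trace of a time-ordered product of local operators integrated over imaginary time, arising from the Duhamel expansion of $e^{-\beta H}$. I would therefore need to show that each such weight, for a contour of bounded spatial and temporal extent, can be computed to sufficient precision in time polynomial in the cluster size, and that the accumulated numerical errors can be propagated through the cluster sum without spoiling the final accuracy. Establishing the Peierls bound uniformly in $\lambda$ within the quantum contour ensemble, and controlling the interplay between the imaginary-time integration and the truncation order, is where the bulk of the technical work will lie.
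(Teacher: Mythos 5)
Your broad strategy---a contour representation, a Koteck\'y--Preiss-type convergence bound, and truncation of the cluster expansion at order $O(\log(\abs{V(G)}/\epsilon))$---is the same as the paper's, but the two steps you either defer or wave through are precisely where the paper's technical content lies, and as written your proposal fails at both. The first gap is the one you yourself flag and then leave open: you keep the continuous-time representation of Borgs, Koteck\'y, and Ueltschi, in which contours live in $\mathbb{Z}^\nu\times[0,\beta]$ and a contour weight is an infinite series of iterated imaginary-time integrals coming from the Duhamel/Dyson expansion. The norm bounds on that series are enough for Pirogov--Sinai theory (convergence), but they do not give any procedure for evaluating an individual weight to the multiplicative accuracy the algorithm needs; ``the bulk of the technical work'' is not a technicality but the main obstruction. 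The paper's resolution is to change the representation before expanding: imaginary time is discretized into $m$ slices with $\hat{\beta}=\beta/m$, contours live on $\mathbb{Z}^\nu_m=\mathbb{Z}^\nu\times(\mathbb{Z}/m\mathbb{Z})$, and the transfer matrix $e^{-\hat{\beta}H_G^g}$ is expanded by inclusion--exclusion over subsets of edges rather than by a Dyson series. A contour weight then becomes a finite sum, over edge subsets $S$ with prescribed support and subsets $S'\subseteq S$, of matrix elements of $e^{-\hat{\beta}(H_{G,\Phi}^g+\lambda\sum_{e\in S'}\Psi(e))}$, which is computable exactly in time $\exp(O(\abs{\bar{\gamma}}))$ (Lemma~\ref{lemma:ContourWeightsAlgorithm}). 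Without this device (or an equivalent one) your algorithm has no subroutine for its own weights.

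The second gap is your claim that stability lets you factor $Z_G^g=w_g^{\abs{V(G)}}\,\Xi_G^g$ with $\Xi_G^g$ a ``genuine'' polymer gas whose weights obey a local Peierls bound, bypassing the Pirogov--Sinai dressed weights. This is false once $\Xi$ contains more than one ground state with no symmetry relating them: resumming the nested, labelled contour structure unavoidably produces the dressed weights $w_\gamma^g=w_\gamma\prod_{g'\in\Xi}Z^{g'}_{\operatorname{int}_{g'}\bar{\gamma}}(\beta,\lambda)/Z^{g}_{\operatorname{int}_{g'}\bar{\gamma}}(\beta,\lambda)$, and stability does not make these ratios equal to one---it only yields the surface-order bound $\prod_{g'\in\Xi}\abs{Z^{g'}_{\operatorname{int}_{g'}\bar{\gamma}}/Z^{g}_{\operatorname{int}_{g'}\bar{\gamma}}}\leq e^{2(\nu+1)\abs{\bar{\gamma}}}$ (Lemma~\ref{lemma:QuantumStableGroundStatesPolymerWeight}), which combines with the decay of $w_\gamma$ to give $\abs{w_\gamma^g}\leq e^{-\mu^\star\abs{\bar{\gamma}}}$. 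Note also that the bare contour ensemble with its matching condition is not an abstract polymer model, so Koteck\'y--Preiss cannot be applied to it directly; the dressed weights are what make the reduction to an abstract polymer model possible. Algorithmically the consequence is serious: $w_\gamma^g$ contains partition functions of macroscopic interior regions and is therefore not locally computable at all. The paper handles this by approximating the weights recursively, by induction on the level of a contour, running truncated cluster expansions for each interior region with errors calibrated so that the recursion closes (Lemma~\ref{lemma:TruncatedQuantumClusterExpansionApproximationAlgorithm}). This level-by-level recursion and its error bookkeeping are entirely absent from your proposal, and without them the enumerate-and-sum step you describe cannot be executed.
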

As intimated previously, our algorithm is based on the observation of Ref.~\cite{helmuth2020algorithmic} that Pirogov--Sinai theory for low-temperature classical spin systems can be used to obtain efficient approximate counting algorithms, and on the generalisation of Pirogov--Sinai theory to quantum perturbations of classical spin systems~\cite{borgs1996low}. The main challenge we face arises from the non-commutative nature of quantum spin systems, which necessitates somewhat complicated contour model representations for applying Pirogov--Sinai theory~\cite{borgs1996low, datta1996alow}. Our main contribution is to show that a modification of the representation of Ref.~\cite{borgs1996low} allows for an implementation of the strategy used in Refs.~\cite{helmuth2020algorithmic, borgs2020efficient}. That is, we show that contours can be enumerated sufficiently quickly, and their weights can be approximated sufficiently quickly. We stress that it is far from being clear that this is possible, as the representation of Ref.~\cite{borgs1996low} is based on the Dyson series. Thus, while the representation of Ref.~\cite{borgs1996low} provides the upper bounds needed for applying Pirogov--Sinai theory in a straightforward way, these bounds are not sufficient for an efficient approximation algorithm, where one must be able to efficiently and accurately estimate weights. Our inclusion-exclusion approach for expansion allows us to overcome this difficulty. 

The arguments in Ref.~\cite{borgs1996low} apply in a greater generality than those we consider. In particular, they do not restrict their attention to the situation in which all ground states are stable, and they also consider expansions for correlation functions. We have restricted our focus to algorithms for computing the partition function to simplify the presentation as much as possible. In the stable setting, efficient classical simulation of correlation functions could be obtained with the methods of this paper. An extension of our methods to allow for non-stable ground states (i.e., the full low-temperature phase diagram) should also be feasible; such an extension has been achieved in the classical setting of the Potts model~\cite{borgs2012tight, borgs2020efficient}. We leave such an extension, which carries an increase in combinatorial complexity, to future work. Other interesting future directions include extensions to hypergraph interactions, and to analysing boundary conditions other than the pure boundary conditions treated in this paper.

This paper is structured as follows. We introduce the necessary preliminaries in \mbox{Section~\ref{section:Preliminaries}}. In \mbox{Section~\ref{section:ContourExpansion}}, We show how the partition function of quantum spin systems admits a contour expansion and consequently a cluster expansion. In \mbox{Section~\ref{section:ConvergenceOfTheClusterExpansion}}, we establish criteria for the absolute convergence of the cluster expansion. Then, in \mbox{Section~\ref{section:ApproximationAlgorithm}}, we use this framework to establish our approximation algorithm for the quantum partition function at low temperature. Finally, we conclude in \mbox{Section~\ref{section:ConclusionAndOutlook}} with some remarks and further open problems.

\section{Preliminaries}
\label{section:Preliminaries}

\subsection{Graph Theory}

Let \mbox{$G=(V,E)$} be a graph with vertex set $V$ and edge set $E$. We denote the open and closed neighbourhood of a vertex $v \in V$ by \mbox{$\mathcal{N}(v)=\{u\mid\{u,v\}\in E\}$} and \mbox{$\mathcal{N}[v]=\mathcal{N}(v)\cup\{v\}$}, respectively. More generally, we denote the open and closed neighbourhood of a subset of vertices $U \subseteq V$ by \mbox{$\mathcal{N}(U)\coloneqq\bigcup_{v \in U}\mathcal{N}(v){\setminus}U$} and \mbox{$\mathcal{N}[U]\coloneqq\bigcup_{v \in U}\mathcal{N}[v]$}, respectively. For a subset $U$ of $V$, the induced subgraph $G[U]$ is the subgraph of $G$ whose vertex set is $U$ and whose edge set consists of all edges in $G$ which have both endpoints in $U$. By a slight abuse of notation, we often identify a vertex subset with the subgraph it induces. Let $G$ be a finite induced subgraph of an infinite lattice graph $\mathbb{L}$ (e.g. $\mathbb{Z}^\nu$). The set $V(G)$ partitions $\mathbb{L}[V(\mathbb{L}){\setminus}V(G)]$ into connected components. We define the \emph{interior} of $G$, $\operatorname{int}G$, to be the union of all components with finite support and define the \emph{exterior} of $G$, $\operatorname{ext}G$, to be the unique component with infinite support. The \emph{boundary} of $G$, denoted by $\partial G$, is the set of edges in $E(\mathbb{L})$ with an endpoint in $V(G)$ and an endpoint in $V(\mathbb{L}){\setminus}V(G)$. The \emph{interior boundary} of $G$, denoted by $\partial^\text{in}G$, is the set of vertices in $V(G)$ with a neighbour in $V(\mathbb{L}){\setminus}V(G)$ and the \emph{exterior boundary} of $G$, denoted by $\partial^\text{ex}G$, is the set of vertices in $V(\mathbb{L}){\setminus}V(G)$ with a neighbour in $V(G)$.

Let $\mathbb{Z}^\nu$ denote the $\nu$-dimensional lattice graph and let $\mathbb{Z}^\nu_m$ denote the lattice graph $\mathbb{Z}^\nu\times(\mathbb{Z}/m\mathbb{Z})$. The graph $\mathbb{Z}^\nu\times(\mathbb{Z}/m\mathbb{Z})$ is periodic in the final coordinate. A sequence $(G_n)_{n\in\mathbb{N}}$ of finite induced subgraphs of $\mathbb{Z}^\nu$ converges to $\mathbb{Z}^\nu$ in the sense of van Hove if and only if
{
\begin{enumerate}
    \item $(V(G_n))_{n\in\mathbb{N}}$ is an increasing sequence of subsets,
    \item $\bigcup_{n\in\mathbb{N}}V(G_n) = \mathbb{Z}^\nu$,
    \item $\lim_{n\to\infty}\frac{\abs{\partial^\text{in}G_n}}{\abs{V(G_n)}} = 0$.
\end{enumerate}
}

\subsection{Quantum Spin Systems}

A \emph{quantum spin system} is formally modelled on the lattice graph $\mathbb{Z}^\nu$ with $\nu\geq2$. At each vertex $v$ of $\mathbb{Z}^\nu$ there is a $d$-dimensional Hilbert space $\mathcal{H}_v$ with $d<\infty$. The Hilbert space on $\mathbb{Z}^\nu$ is given by \mbox{$\mathcal{H}\coloneqq\bigotimes_{v \in V(\mathbb{Z}^\nu)}\mathcal{H}_v$}. In the sequel, we only consider systems on finite induced subgraphs of $\mathbb{Z}^\nu$. We require several assumptions on the quantum spin system, which can summarised as saying that we consider stable quantum perturbations of classical spin systems. We now describe these assumptions; for brevity in the sequel, we refer to a quantum spin system satisfying these assumptions as a \emph{stable quantum perturbation of a classical spin system}. We assume that the Hamiltonian is of the form \mbox{$H=H_\Phi+\lambda H_\Psi$}, where $H_\Phi$ is diagonal in the basis $\ket{s}=\bigotimes_{v \in V(\mathbb{Z}^\nu)}\ket{s_v}$ of a classical spin space $\Omega=[d]^{V(\mathbb{Z}^\nu)}$ and $H_\Psi$ is a local quantum perturbation. We are interested in the \emph{quantum partition function} $Z(\beta,\lambda)$ at inverse temperature $\beta>0$, defined by \mbox{$Z(\beta,\lambda)\coloneqq\Tr\left[e^{-\beta H}\right]$}.

We assume that the Hamiltonian $H_\Phi$ has translation-invariant interactions and is of the form \mbox{$\expval{H_\Phi}{s}=\sum_{v \in V(\mathbb{Z}^\nu)}\Phi_s(v)$}, where $\Phi_s(v)$ is a real number that depends on $s\in\Omega$ only via the spins $s_u$ for which $u\in\mathcal{N}[v]$. We assume that the Hamiltonian $H_\Phi$ has finitely many ground states $\Xi$ with ground state energy $e_0$. We say that a vertex $v \in V(\mathbb{Z}^\nu)$ is in the ground state $g$ if the spin $s_u$ coincides with the ground state $g$ for all \mbox{$u\in\mathcal{N}[v]$}. Otherwise, we say the vertex is excited. We further assume that \emph{Peierls' condition} holds, that is, there exists a constant \mbox{$\alpha_0>0$}, such that \mbox{$\Phi_s(v) \geq e_0+\alpha_0$} for all excited vertices $v \in V(\mathbb{Z}^\nu)$ of all configurations $s\in\Omega$. 

We assume that the Hamiltonian $H_\Psi$ is of the form \mbox{$H_\Psi=\sum_{e \in E(\mathbb{Z}^\nu)}\Psi(e)$}, where $\Psi(e)$ is a self-adjoint operator on \mbox{$\mathcal{H}_e\coloneqq\bigotimes_{v \in e}\mathcal{H}_v$}. We further assume that \mbox{$\norm{\Psi(e)}\leq1$} for every $e \in E(\mathbb{Z}^\nu)$, where $\norm{\;\cdot\;}$ denotes the operator norm. Note that this is always possible by a rescaling of $\lambda$.

Let \mbox{$G=(V,E)$} be a finite subgraph of $\mathbb{Z}^\nu$. The Hilbert space on $G$ is given by \mbox{$\mathcal{H}_G\coloneqq\bigotimes_{v \in V}\mathcal{H}_v$}. Let $H_G^g$, $H_{G,\Phi}^g$, and $H_{G,\Psi}^g$ denote the Hamiltonians $H_G$, $H_{G,\Phi}$, and $H_{G,\Psi}$ on $\mathcal{H}_G$ with boundary conditions $g\in\Xi$ on $\mathcal{N}(V)$.  This means that $H_G^g$ is the operator on $\mathcal{H}_G$ defined by the partial expectation value $\matrixel{s_{\mathcal{N}(V)}}{H_{\mathbb{Z}^\nu[\mathcal{N}[V]]}}{s_{\mathcal{N}(V)}}$, where the spin configuration $s_{\mathcal{N}(V)}$ coincides with the ground state $g$ on $\mathcal{N}(V)$, and similarly for $H_{G,\Phi}^g$ and $H_{G,\Psi}^g$. We define the partition function $Z_G^g(\beta,\lambda)$ on $\mathcal{H}_G$ with boundary conditions $g\in\Xi$ on $\mathcal{N}(V)$ by \mbox{$Z_G^g(\beta,\lambda)\coloneqq\Tr_{\mathcal{H}_G}\left[e^{-\beta H_G^g}\right]$}. We assume that all ground states $\Xi$ are \emph{stable} and define this notion subsequently (see \mbox{Section~\ref{section:ConvergenceOfTheClusterExpansion}}). 

For the remainder of the paper \mbox{$G=(V,E)$} will denote a finite induced subgraph of $\mathbb{Z}^\nu$.

\subsection{Approximation Schemes}

A \emph{fully polynomial-time approximation scheme} for a sequence of complex numbers $(z_n)_{n\in\mathbb{N}}$ is a deterministic algorithm that, for any $n$ and $\epsilon>0$, produces a complex number $\hat{z}_n$ such that \mbox{$\abs{z_n-\hat{z}_n}\leq\epsilon\abs{z_n}$} in time polynomial in $n$ and $1/\epsilon$.

\subsection{Abstract Polymer Models}

An \emph{abstract polymer model} is a triple $(\mathcal{C}, w, \sim)$, where $\mathcal{C}$ is a countable set whose elements are called \emph{polymers}, \mbox{$w:\mathcal{C}\to\mathbb{C}$} is a function that assigns to each polymer $\gamma\in\mathcal{C}$ a \emph{weight} $w_\gamma\in\mathbb{C}$, and $\sim$ is a \emph{symmetric compatibility relation} such that each polymer is incompatible with itself. A set of polymers is called \emph{admissible} if all the polymers in the set are pairwise compatible. Note that the empty set is admissible. Let $\mathcal{G}$ denote the collection of all admissible sets of polymers from $\mathcal{C}$. Then the abstract polymer partition function is defined by
\begin{equation}
    Z(\mathcal{C},w) \coloneqq \sum_{\Gamma\in\mathcal{G}}\prod_{\gamma\in\Gamma}w_\gamma. \notag
\end{equation}

\subsection{Abstract Cluster Expansion}

Let $\Gamma$ be a non-empty ordered tuple of polymers. The \emph{incompatibility graph} $H_\Gamma$ of $\Gamma$ is the graph with vertex set $\Gamma$ and edges between any two polymers if and only if they are incompatible. $\Gamma$ is called a \emph{cluster} if its incompatibility graph $H_\Gamma$ is connected. Let $\mathcal{G}_C$ denote the set of all clusters of polymers from $\mathcal{C}$. The \emph{abstract cluster expansion}~\cite{kotecky1986cluster, friedli2017statistical} is a formal power series for $\log{Z(\mathcal{C},w)}$ in the variables $w_\gamma$, defined by
\begin{equation}
    \log(Z(\mathcal{C},w)) \coloneqq \sum_{\Gamma\in\mathcal{G}_C}\varphi(H_\Gamma)\prod_{\gamma\in\Gamma}w_\gamma, \notag
\end{equation}
where $\varphi(H)$ denotes the \emph{Ursell function} of a graph $H$:
\begin{equation}
    \varphi(H) \coloneqq \frac{1}{\abs{V(H)}!}\sum_{\substack{E \subseteq E(H) \\ \text{spanning} \\ \text{connected}}}(-1)^{\abs{E}}. \notag
\end{equation}

\subsection{Contour Models}

We now introduce contour models which formalise the idea that spin systems have a geometric interpretation. For a detailed overview of contour models, we refer the reader to Ref.~\cite{friedli2017statistical}. A \emph{contour} is a pair \mbox{$\gamma=(\bar{\gamma},\text{lab}_\gamma)$}. The \emph{support} $\bar{\gamma}$ of $\gamma$ is a finite connected subset of $V(\mathbb{Z}^\nu_m)$. The \emph{labelling function} $\text{lab}_\gamma$ of $\gamma$ labels each edge in the boundary $\partial\bar{\gamma}$ with a ground state $g\in\Xi$ in such a way that $\text{lab}_\gamma$ is constant on the boundary of all the connected components of $V(\mathbb{Z}^\nu_m){\setminus}\bar{\gamma}$. Let $\operatorname{int}_g\bar{\gamma}$ denote the union of all connected components of $\operatorname{int}\bar{\gamma}$ with label $g$. The \emph{level} $l_\gamma$ of a contour $\gamma\in\mathcal{C}$ is defined inductively as follows. If \mbox{$\operatorname{int}\bar{\gamma}=\varnothing$}, then $l_\gamma=0$. Otherwise, \mbox{$l_\gamma=1+\max\{l_{\gamma'} \mid \bar{\gamma}'\subseteq\operatorname{int}\bar{\gamma}\}$}.

Two contours $\gamma$ and $\gamma'$ are \emph{compatible} if $\mathbb{Z}^\nu_m[\bar{\gamma}\cup\bar{\gamma}']$ is disconnected. A contour $\gamma$ is of \emph{type} $g$ if its exterior is labelled $g$. Let $\mathcal{G}^g$ denote the collection of all sets of pairwise compatible contours of type $g$. Let $\Gamma$ be a set of pairwise compatible contours. A contour $\gamma\in\Gamma$ is \emph{external} in $\Gamma$ if \mbox{$\bar{\gamma}\cap\operatorname{int}\bar{\gamma}'=\varnothing$} for all $\gamma'\in\Gamma$. Let $\mathcal{G}_\text{ext}^g$ denote the collection of all sets of pairwise compatible contours that are external and of type $g$. $\Gamma$ is \emph{matching and of type} $g$ if the labelling of the contours in $\Gamma$ is constant on the boundary of all the connected components of $V(\mathbb{Z}^\nu_m){\setminus}\bigcup_{\gamma\in\Gamma}\bar{\gamma}$ and all external contours in $\Gamma$ are of type $g$. Let $\mathcal{G}_\text{match}^g$ denote the collection of all sets of pairwise compatible contours that are matching and of type $g$. 

A \emph{contour model} is a pair $(\mathcal{C},w)$, where $\mathcal{C}$ is a countable set of contours and \mbox{$w:\mathcal{C}\to\mathbb{C}$} is a function that assigns to each contour $\gamma\in\mathcal{C}$ a complex number $w_\gamma$ called the \emph{weight} of the contour. Contour models have a natural partition function associated to them, defined by
\begin{equation}
    Z^g(\mathcal{C},w) \coloneqq \sum_{\Gamma\in\mathcal{G}_\text{match}^g}\prod_{\gamma\in\Gamma}w_\gamma. \notag
\end{equation}

\section{The Contour Expansion}
\label{section:ContourExpansion}

In this section we shall show how the partition function of a quantum spin system admits a contour model representation~\cite{borgs1996low, datta1996alow, ueltschi1998discontinuous}. Let us define \mbox{$\hat{\beta}=\hat{\beta}(m)\coloneqq\frac{\beta}{m}$}. The choice of the parameter $m$ is technical and specified in the sequel. We have the following lemma. 
\begin{lemma}[{restate=[name=restatement]QuantumContourExpansion}]
    \label{lemma:QuantumContourExpansion}
    The partition function $Z_G^g(\beta,\lambda)$ admits the following contour representation.
    \begin{equation}
        Z_G^g(\beta,\lambda) = e^{-\hat{\beta}e_0\abs{V(G)}}\sum_{\Gamma\in\mathcal{G}_\text{match}^g}\prod_{\gamma\in\Gamma}w_\gamma. \notag
    \end{equation}
\end{lemma}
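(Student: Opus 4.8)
The plan is to derive the contour representation by expanding the trace of $e^{-\beta H_G^g}$ via a high-temperature-style expansion at the rescaled inverse temperature $\hat\beta = \beta/m$, and then organizing the resulting geometric objects into contours. Concretely, I would first write $e^{-\beta H_G^g} = \bigl(e^{-\hat\beta H_G^g}\bigr)^m$ and insert complete sets of classical spin states $\ket{s}$ between the $m$ factors, so that
\begin{equation}
    Z_G^g(\beta,\lambda) = \sum_{s^{(1)},\dots,s^{(m)}} \prod_{k=1}^m \matrixel{s^{(k)}}{e^{-\hat\beta H_G^g}}{s^{(k+1)}}, \notag
\end{equation}
with the cyclic convention $s^{(m+1)} = s^{(1)}$ enforced by the trace and the boundary condition $g$ fixed on $\mathcal{N}(V)$. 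Each matrix element is then expanded using the Dyson series for $e^{-\hat\beta H_G^g} = e^{-\hat\beta(H_{G,\Phi}^g + \lambda H_{G,\Psi}^g)}$, treating $H_{G,\Phi}^g$ as the diagonal (classical) part and $\lambda H_{G,\Psi}^g$ as the perturbation. This produces a configuration on the space-time lattice $\mathbb{Z}^\nu_m$, where the $m$ copies in the time direction record the sequence of classical configurations and the Dyson-series insertions of $\Psi(e)$ mark the edges and times where the configuration is acted on non-diagonally.

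Next I would identify, for each such space-time configuration, the set of \emph{excited} vertices of $\mathbb{Z}^\nu_m$: those $(v,k)$ at which either the spins in $\mathcal{N}[v]$ fail to coincide with a single ground state, or at which a perturbation operator $\Psi(e)$ has been inserted. By Peierls' condition together with the locality and norm bound $\norm{\Psi(e)} \le 1$, the diagonal contribution of the non-excited (ground-state) region factorizes and yields exactly the prefactor $e^{-\hat\beta e_0 \abs{V(G)}}$, after recognizing that each of the $\abs{V(G)}$ space columns contributes the ground-state energy $e_0$ across all $m$ time slices. The key combinatorial step is then to decompose the set of excited vertices into its connected components in $\mathbb{Z}^\nu_m$; each connected component, together with the labelling of its boundary edges by the surrounding ground states (which is well-defined and locally constant precisely because outside the component every vertex sits in a definite ground state), defines a contour $\gamma = (\bar\gamma, \text{lab}_\gamma)$. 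I would verify that distinct components give rise to supports $\bar\gamma$ such that $\mathbb{Z}^\nu_m[\bar\gamma \cup \bar\gamma']$ is disconnected, i.e.\ the resulting contours are pairwise compatible, and that the induced collection $\Gamma$ is matching and of type $g$ by construction of the boundary labelling and the fixed exterior condition $g$.

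The weights $w_\gamma$ are then \emph{defined} to be the sum, over all space-time configurations supported on $\bar\gamma$ consistent with the boundary labelling, of the corresponding Dyson-series contributions (the product of $-\hat\beta\lambda$ factors, the diagonal energy exponentials relative to the ground-state reference, and the time-ordered integrals). Because the Dyson series for a given matrix element factorizes over the connected components of the excited region---operators acting on disjoint, non-adjacent supports commute and the time-integrals decouple---the total weight of a matching family $\Gamma$ is exactly $\prod_{\gamma \in \Gamma} w_\gamma$, which establishes the claimed identity. The main obstacle I anticipate is precisely this factorization over connected components: one must show that the Dyson-series expansion, which is intrinsically a global time-ordered object coupling all edges through the single trace, genuinely decouples into a product over contours whose supports are separated in $\mathbb{Z}^\nu_m$. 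This requires care because the time-ordering is shared across the whole lattice, so one has to argue that the time-integration and the interleaving of perturbation insertions localized on disjoint spatial regions can be disentangled into independent factors, with the ground-state background supplying a consistent reference energy; the definition of $\hat\beta$ and the introduction of the auxiliary parameter $m$ are exactly what make the per-slice diagonal structure clean enough for this decoupling and for the later convergence estimates. I would treat the verification that the labelling function is globally consistent (constant on each component of the complement) and that no two contours can share the reference ground state ambiguously as the remaining technical points to nail down.
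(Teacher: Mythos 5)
Your derivation is correct in substance, but it is not the paper's proof: it is essentially the original Dyson-series derivation of Borgs, Koteck\'y, and Ueltschi, which this paper deliberately replaces. Where you expand each slice matrix element $\matrixel{s^{(k)}}{e^{-\hat\beta H_G^g}}{s^{(k+1)}}$ in a Dyson/Duhamel series in the perturbation $\lambda H_{G,\Psi}^g$, the paper instead expands the single-slice transfer matrix $T=e^{-\hat\beta H_G^g}$ by inclusion-exclusion over edge subsets, writing $T=\sum_{S\subseteq E}\bar{w}_S$ with $\bar{w}_S=(-1)^{\abs{S}}\sum_{S'\subseteq S}(-1)^{\abs{S'}}e^{-\hat\beta\left(H_{G,\Phi}^g+\lambda\sum_{e\in S'}\Psi(e)\right)}$, and then resums over the support $U=\text{supp}(S)$. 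Both expansions produce the same space-time geometry on $\mathbb{Z}^\nu_m$: excited vertices (classical excitations or marked quantum insertions), connected components as contour supports, boundary labellings, and the matching/type-$g$ structure; and both need a factorization of weights over components, which in the paper is the identity $T(U)=e^{-\hat\beta H_{G,\Phi}^g(V\setminus\mathcal{N}[U])}\prod_{i}T_{\mathcal{N}[U_i]}(U_i)$, and in your route is the decoupling of time-ordered integrals over disjoint supports --- the obstacle you flag, which is genuinely provable (operators on disjoint supports act on disjoint tensor factors, and the interleaved time integrations factor by the standard simplex-decomposition argument). The real difference is what the weights $w_\gamma$ are at the end. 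Yours are infinite sums of time-ordered integrals; the paper's are finite signed sums of matrix elements of exponentials of explicit finite-dimensional matrices $H_{G,\Phi}^g(X)+\lambda\sum_{e\in S'}\Psi(e)$ localized near $\bar\gamma$. For the statement of Lemma~\ref{lemma:QuantumContourExpansion} alone this distinction is immaterial --- either construction yields a valid contour representation --- but it is decisive downstream: Lemma~\ref{lemma:ContourWeightsAlgorithm} requires computing $w_\gamma$ \emph{exactly} in time $\exp(O(\abs{\bar\gamma}))$, which the inclusion-exclusion weights permit (finitely many subsets $S$, $S'$, and matrix exponentials of bounded dimension) and the Dyson-series weights do not obviously permit. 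This is exactly the point the introduction emphasizes as the paper's main technical contribution; so if you carried your proof forward, you would either have to switch to the paper's expansion or supply an additional argument that the time-ordered integrals can be approximated with controlled error without breaking the overall algorithm.
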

We prove \mbox{Lemma~\ref{lemma:QuantumContourExpansion}} in \mbox{Appendix~\ref{section:QuantumContourExpansion}}. The details of the contour model may be obtained by examining the proof. We emphasise that the set of contours depends on the graph $G$, and the weights $w_\gamma$ depend on the parameters of the quantum spin system.

It is now standard to derive a second representation of the partition function, known as the \emph{external contour representation}, which does not require a matching condition on the contours. By summing over sets of pairwise compatible contours that are external and of type $g$ and iterating over their interiors, we obtain the following corollary.
\begin{corollary}
    The partition function $Z_G^g(\beta,\lambda)$ admits the following external contour representation.
    \begin{equation}
        Z_G^g(\beta,\lambda) = e^{-\hat{\beta}e_0\abs{V(G)}}\sum_{\Gamma\in\mathcal{G}_\text{ext}^g}\prod_{\gamma\in\Gamma}w_\gamma^\text{ext}, \notag
    \end{equation}
    where
    \begin{equation}
        w_\gamma^\text{ext} \coloneqq w_\gamma\prod_{g'\in\Xi}Z_{\text{int}_{g'}\bar{\gamma}}^{g'}(\beta,\lambda). \notag
    \end{equation}
\end{corollary}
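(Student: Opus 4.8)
The plan is to start from the matching representation of Lemma~\ref{lemma:QuantumContourExpansion} and reorganise the sum over $\Gamma\in\mathcal{G}_\text{match}^g$ according to the external contours it contains. Every matching configuration of type $g$ has a well-defined set of external contours, and by the definition of ``external'' these are pairwise compatible and of type $g$, so they form an element of $\mathcal{G}_\text{ext}^g$. Conversely, once the external contours are fixed, the remaining contours of $\Gamma$ must lie in the interiors of the external ones. The key geometric observation I would establish first is that each non-external contour $\gamma'$ is supported inside $\operatorname{int}\bar{\gamma}$ for exactly one external contour $\gamma$, and moreover lies inside a single connected component of $\operatorname{int}\bar{\gamma}$ carrying a well-defined label $g'\in\Xi$. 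This is what the matching condition buys us: it forces the labelling to be constant on the boundary of each component of the complement of the supports, so the contours filling a component with label $g'$ form a matching configuration of type $g'$ within that component.

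Given this decomposition, the second step is to factorise the weight product. Writing $\Gamma=\Gamma_\text{ext}\cup\bigcup_{\gamma\in\Gamma_\text{ext}}\bigcup_{g'\in\Xi}\Gamma_{\gamma,g'}$, where $\Gamma_{\gamma,g'}$ is the subconfiguration living in $\operatorname{int}_{g'}\bar{\gamma}$, the product $\prod_{\gamma''\in\Gamma}w_{\gamma''}$ splits as $\prod_{\gamma\in\Gamma_\text{ext}}w_\gamma$ times a product over the interior configurations. I would then sum independently over each $\Gamma_{\gamma,g'}$: since the compatibility and matching constraints inside distinct components are independent of one another (distinct components have disjoint supports and do not interact), summing over all matching type-$g'$ configurations in $\operatorname{int}_{g'}\bar{\gamma}$ produces exactly $e^{\hat{\beta}e_0\abs{V(\operatorname{int}_{g'}\bar{\gamma})}}Z_{\operatorname{int}_{g'}\bar{\gamma}}^{g'}(\beta,\lambda)$, again by Lemma~\ref{lemma:QuantumContourExpansion} applied on the induced subgraph $\operatorname{int}_{g'}\bar{\gamma}$. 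Absorbing the volume prefactors into the global $e^{-\hat{\beta}e_0\abs{V(G)}}$ and collecting the interior partition functions into $w_\gamma^\text{ext}=w_\gamma\prod_{g'\in\Xi}Z_{\operatorname{int}_{g'}\bar{\gamma}}^{g'}(\beta,\lambda)$ yields the claimed formula.

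The main obstacle, and the place I would be most careful, is verifying that the map $\Gamma\mapsto(\Gamma_\text{ext},(\Gamma_{\gamma,g'}))$ is a genuine bijection onto $\mathcal{G}_\text{ext}^g\times\prod\mathcal{G}_\text{match}^{g'}$, and that the volumes bookkeep correctly. For surjectivity one must check that gluing any admissible external family to arbitrary matching interior configurations produces a configuration that is again globally matching of type $g$; this requires that the labels assigned by $\operatorname{int}_{g'}\bar{\gamma}$ are consistent with the labelling inherited from $\gamma$, which is exactly the content of ``$\operatorname{int}_{g'}\bar{\gamma}$ is the union of components of label $g'$.'' For the volume accounting, one uses that $V(G)$ is partitioned into the exterior region (sites in the ground state $g$), the supports $\bar\gamma$, and the interiors $\operatorname{int}_{g'}\bar\gamma$; the energy $e_0$ per site in each region is reassembled into the single prefactor. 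I expect the compatibility check — that two external contours impose no constraint on each other's interior fillings beyond what is already encoded in $\mathcal{G}_\text{ext}^g$ — to be the subtle combinatorial point, since one must confirm that supports in the interior of one external contour cannot be incompatible with another external contour; this follows from disjointness of the interiors of distinct compatible external contours.
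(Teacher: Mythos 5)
Your proposal follows the same route the paper gestures at (fix the external contours, then iterate the matching representation over their interiors), but there is a genuine gap at the central resummation step, and it is not the point you single out as subtle at the end. In a global matching family $\Gamma$, every contour lying in $\operatorname{int}_{g'}\bar{\gamma}$ must in particular be compatible with $\gamma$ itself, i.e.\ its support must not be adjacent to $\bar{\gamma}$ (compatibility demands that the union of supports be disconnected). By contrast, the quantity you want to produce, $Z^{g'}_{\operatorname{int}_{g'}\bar{\gamma}}(\beta,\lambda)$ expanded via Lemma~\ref{lemma:QuantumContourExpansion} on that region with boundary condition $g'$, sums over \emph{all} matching families of type $g'$ in $\operatorname{int}_{g'}\bar{\gamma}$, including families whose supports touch the inner boundary of the region, i.e.\ are adjacent to $\bar{\gamma}$. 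Such families never arise as restrictions of a global matching family containing $\gamma$: if you glue one back in, the corresponding excitations are adjacent to $\bar{\gamma}$ and merge with it into a single larger contour, so the glued configuration is counted on the left-hand side under a \emph{different} external family, and with a different weight (the weights factorize only over connected components of the excited set; for adjacent supports the classical energies $\Phi_s(v)$ and the operators $T_{\mathcal{N}[U]}(U)$ couple across the interface, so $w_{\text{merged}} \neq w_\gamma \cdot w_{\text{touching piece}}$). Consequently the map you assert to be a bijection onto $\mathcal{G}_{\text{ext}}^g \times \prod \mathcal{G}_{\text{match}}^{g'}$ is not one: summing the interiors freely over-counts, both in multiplicity and in weight.

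The fix, which is how Borgs--Koteck\'y--Ueltschi make the ``standard'' derivation exact, is to interpret the interior partition functions in a restricted sense: as contour sums over matching families of type $g'$ whose supports are, in addition, not adjacent to $\bar{\gamma}$ (equivalently, up to explicit ground-state energy factors, partition functions of the interior shrunk by a boundary layer). With that convention your decomposition and gluing argument goes through verbatim, and this is the convention the corollary implicitly carries. A second, related point: $\operatorname{int}_{g'}\bar{\gamma}$ is a space-time subset of $V(\mathbb{Z}^\nu_m)$, generally not of the cylindrical form $\Lambda\times(\mathbb{Z}/m\mathbb{Z})$, so ``Lemma~\ref{lemma:QuantumContourExpansion} applied to the induced subgraph $\operatorname{int}_{g'}\bar{\gamma}$'' is not literally available; one must \emph{define} the partition function of such regions by the contour sum itself. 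Finally, the check you do flag as the subtle one is in fact automatic: since $\mathcal{N}(\operatorname{int}\bar{\gamma})\subseteq\bar{\gamma}$, a contour supported in the interior of one external contour can never be adjacent to a different, compatible external contour, so no constraint is lost there.
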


Now we derive a third representation of the partition function, known as the \emph{polymer model representation}, which does not require an external condition on the contours. By summing over sets of pairwise compatible contours and iterating (see Ref.~\cite[Section 7.3]{friedli2017statistical}), we obtain the following corollary. 
\begin{corollary}
    The partition function $Z_G^g(\beta,\lambda)$ admits the following polymer model representation.
    \begin{equation}
        Z_G^g(\beta,\lambda) = e^{-\hat{\beta}e_0\abs{V(G)}}\sum_{\Gamma\in\mathcal{G}^g}\prod_{\gamma\in\Gamma}w_\gamma^g, \notag
    \end{equation}
    where
    \begin{equation}
        w_\gamma^g \coloneqq w_\gamma\prod_{g'\in\Xi}\frac{Z_{\text{int}_{g'}\bar{\gamma}}^{g'}(\beta,\lambda)}{Z_{\text{int}_{g'}\bar{\gamma}}^{g}(\beta,\lambda)}. \notag
    \end{equation}
\end{corollary}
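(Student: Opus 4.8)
The plan is to obtain the polymer model representation from the external contour representation by a resummation over the interiors of contours, following the standard reorganisation of Pirogov--Sinai theory~\cite[Section 7.3]{friedli2017statistical}. It is cleanest to work with the reduced partition functions $\Theta_G^g\coloneqq e^{\hat{\beta}e_0\abs{V(G)}}Z_G^g(\beta,\lambda)$, in terms of which the ratio weight is unchanged since the energy prefactors cancel, $w_\gamma^g = w_\gamma\prod_{g'\in\Xi}\Theta_{\text{int}_{g'}\bar{\gamma}}^{g'}/\Theta_{\text{int}_{g'}\bar{\gamma}}^{g}$. Summing the matching representation of Lemma~\ref{lemma:QuantumContourExpansion} over external contours gives, in this normalisation, the external representation $\Theta_G^g = \sum_{\Gamma\in\mathcal{G}_\text{ext}^g}\prod_{\gamma\in\Gamma}w_\gamma\prod_{g'\in\Xi}\Theta_{\text{int}_{g'}\bar{\gamma}}^{g'}$, and the target identity becomes $\Theta_G^g = \sum_{\Gamma\in\mathcal{G}^g}\prod_{\gamma\in\Gamma}w_\gamma^g$, which I would prove by induction on $\abs{V(G)}$.

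The first step is purely algebraic: inserting $1 = \prod_{g'\in\Xi}\Theta_{\text{int}_{g'}\bar{\gamma}}^{g}/\Theta_{\text{int}_{g'}\bar{\gamma}}^{g}$ into the external weight gives $w_\gamma\prod_{g'\in\Xi}\Theta_{\text{int}_{g'}\bar{\gamma}}^{g'} = w_\gamma^g\prod_{g'\in\Xi}\Theta_{\text{int}_{g'}\bar{\gamma}}^{g}$. The essential feature is that every interior partition function that survives now carries the boundary condition $g$ rather than $g'$; converting the type-$g'$ interiors into type-$g$ subproblems is exactly what lets a single boundary condition propagate through the recursion. Substituting yields $\Theta_G^g = \sum_{\Gamma\in\mathcal{G}_\text{ext}^g}\prod_{\gamma\in\Gamma}w_\gamma^g\prod_{g'\in\Xi}\Theta_{\text{int}_{g'}\bar{\gamma}}^{g}$.

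The second step applies the inductive hypothesis. The base case $V(G)=\varnothing$ is immediate, both sides equalling $1$. For the inductive step, each interior $\text{int}_{g'}\bar{\gamma}$ is a strictly smaller induced subgraph, because $\bar{\gamma}$ is nonempty and disjoint from its interior, so that $\abs{\text{int}_{g'}\bar{\gamma}}\leq\abs{V(G)}-\abs{\bar{\gamma}}<\abs{V(G)}$. Hence by hypothesis $\Theta_{\text{int}_{g'}\bar{\gamma}}^{g} = \sum_{\Gamma'\in\mathcal{G}^g(\text{int}_{g'}\bar{\gamma})}\prod_{\gamma'\in\Gamma'}w_{\gamma'}^g$, where $\mathcal{G}^g(W)$ denotes the pairwise compatible type-$g$ families supported in $W$. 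Substituting expresses $\Theta_G^g$ as a sum over a family of external type-$g$ contours together with, for each such contour and each of its interior components, a pairwise compatible type-$g$ family placed in that component.

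The final and main step is to recognise this double sum as $\sum_{\Gamma\in\mathcal{G}^g}\prod_{\gamma\in\Gamma}w_\gamma^g$ via a weight-preserving bijection. I would argue that any pairwise compatible type-$g$ family $\Gamma\in\mathcal{G}^g$ decomposes uniquely: its external contours form an element of $\mathcal{G}_\text{ext}^g$, and each remaining contour lies strictly inside a unique interior component $\text{int}_{g'}\bar{\gamma}$ of a unique external contour $\gamma$, the contours in a fixed component themselves forming a pairwise compatible type-$g$ family; conversely, any such data reassembles into an element of $\mathcal{G}^g$, since contours lying in distinct interior components, or external to all contours, have disjoint supports and are automatically compatible. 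As $\prod_{\gamma}w_\gamma^g$ is multiplicative over contours, the bijection matches the two sums term by term and closes the induction. I expect this bijection to be the crux: one must verify carefully that replacing the type-$g'$ interiors by type-$g$ ones through the ratio weight exactly dissolves the matching and externality constraints of the contour representation, leaving no residual labelling condition, so that each element of $\mathcal{G}^g$ is produced exactly once.
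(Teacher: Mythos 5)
Your route is the one the paper itself invokes---the standard un-matching iteration of Ref.~\cite[Section~7.3]{friedli2017statistical}: external representation, insertion of the ratio weights, induction on volume, recombination---so there is no divergence of method. The problem is that the recombination step, which you correctly single out as the crux, fails as you have argued it. In this paper two contours are compatible only if $\mathbb{Z}^\nu_m[\bar{\gamma}\cup\bar{\gamma}']$ is disconnected, i.e.\ their supports must be disjoint \emph{and non-adjacent}. Your reassembly argument checks compatibility for contours lying in distinct interior components and for contours external to everything, but omits the one pair for which compatibility is not automatic: an external contour $\gamma$ and a contour $\gamma'$ placed inside one of its interior components. A contour with $\bar{\gamma}'\subseteq\operatorname{int}_{g'}\bar{\gamma}$ can perfectly well have support adjacent to $\bar{\gamma}$, in which case $\{\gamma,\gamma'\}$ is incompatible and the union is \emph{not} an element of $\mathcal{G}^g$. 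Such boundary-touching terms genuinely occur on your left-hand side, because your inductive hypothesis expands $\Theta^{g}_{\operatorname{int}_{g'}\bar{\gamma}}$ over \emph{all} compatible type-$g$ families supported in the component; but they never arise from decomposing an element of $\mathcal{G}^g$, since pairwise compatibility of $\Gamma$ forces every non-external contour to avoid the collar of vertices adjacent to the enclosing support. The two sums therefore differ by exactly these terms, and the claimed weight-preserving bijection does not exist.

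The same collar issue is already hidden in the external contour representation you take as input: grouping the matching representation of Lemma~\ref{lemma:QuantumContourExpansion} by external contours produces, in each component $\operatorname{int}_{g'}\bar{\gamma}$, a sum over matching families whose supports are non-adjacent to $\bar{\gamma}$, and this restricted (``collared'') sum is not the same object as the contour expansion of the trace $Z^{g'}_{\text{int}_{g'}\bar{\gamma}}(\beta,\lambda)$, which also contains boundary-touching contours whose weights feel the boundary condition. The references the paper leans on build the resolution into their definitions---one fixes a collar of ground-state spins just inside each finite region (so that no contour can touch the boundary of the region), or equivalently one defines the interior ``partition functions'' entering the recursion to be precisely these collared contour sums rather than traces; this is how Refs.~\cite{friedli2017statistical} and~\cite{borgs1996low} proceed. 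With that convention your induction closes essentially verbatim: state the inductive hypothesis for the collared partition functions, interpret the ratios in $w_\gamma^g$ as ratios of collared sums, and then the decomposition and reassembly really are mutually inverse, since both sides now range over collar-avoiding interior families. So your plan is sound, but to become a proof it must distinguish the collared interior sums from the traces; as written, the key identity is asserted for objects for which it is false.
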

Note that this is an abstract polymer partition function up to a multiplicative factor. As an immediate corollary, we obtain a cluster expansion for $\log(Z_G^g(\beta,\lambda))$.
\begin{corollary}
    The partition function $Z_G^g(\beta,\lambda)$ admits the following cluster expansion.
    \begin{equation}
        \log(Z_G^g(\beta,\lambda)) \coloneqq -\hat{\beta}e_0\abs{V(G)}+\sum_{\Gamma\in\mathcal{G}_C}\varphi(H_\Gamma)\prod_{\gamma\in\Gamma}w_\gamma^g. \notag
    \end{equation}
\end{corollary}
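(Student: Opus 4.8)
The plan is to observe that this corollary is an immediate consequence of the preceding polymer model representation together with the definition of the abstract cluster expansion, so the only work is to identify the contour sum with a bona fide abstract polymer partition function and then take logarithms.

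First I would recall from the polymer model representation that $Z_G^g(\beta,\lambda) = e^{-\hat{\beta}e_0\abs{V(G)}}\sum_{\Gamma\in\mathcal{G}^g}\prod_{\gamma\in\Gamma}w_\gamma^g$, and note that the sum on the right is precisely the abstract polymer partition function $Z(\mathcal{C},w^g)$ of the triple $(\mathcal{C},w^g,\sim)$, where $\mathcal{C}$ is the countable set of contours of type $g$, the weight function is $\gamma\mapsto w_\gamma^g$, and $\sim$ is the contour compatibility relation. To justify that this triple is a legitimate abstract polymer model, I would check the two structural axioms. Compatibility is symmetric, which is clear since the defining criterion that $\mathbb{Z}^\nu_m[\bar{\gamma}\cup\bar{\gamma}']$ be disconnected is symmetric in $\gamma$ and $\gamma'$; and each contour is incompatible with itself, which holds because the support $\bar{\gamma}$ is by definition a finite connected subset of $V(\mathbb{Z}^\nu_m)$, so $\mathbb{Z}^\nu_m[\bar{\gamma}]$ is connected. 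With these axioms verified, the collection $\mathcal{G}^g$ of all sets of pairwise compatible contours of type $g$ is exactly the collection $\mathcal{G}$ of admissible sets, and hence the two partition functions coincide.

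Next I would take the logarithm of both sides. Using $\log\bigl(e^{-\hat{\beta}e_0\abs{V(G)}}Z(\mathcal{C},w^g)\bigr) = -\hat{\beta}e_0\abs{V(G)} + \log(Z(\mathcal{C},w^g))$ and substituting the definition of the abstract cluster expansion, namely $\log(Z(\mathcal{C},w^g)) = \sum_{\Gamma\in\mathcal{G}_C}\varphi(H_\Gamma)\prod_{\gamma\in\Gamma}w_\gamma^g$, immediately yields the stated identity.

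Since no nontrivial estimates enter, there is no substantial obstacle here; the content is purely a matter of matching definitions. The one point that merits a remark is that the abstract cluster expansion is defined as a formal power series in the weights, so the identity claimed in the corollary is, a priori, an identity of formal power series (as reflected by the use of $\coloneqq$). Upgrading it to a genuine numerical identity requires absolute convergence of the cluster expansion, which is not addressed at this stage but is established under the stability hypotheses in \mbox{Section~\ref{section:ConvergenceOfTheClusterExpansion}}; I would flag this explicitly so as not to conflate the formal and analytic statements.
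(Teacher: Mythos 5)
Your proposal is correct and takes essentially the same route as the paper, which likewise treats this corollary as immediate from the polymer model representation together with the definition of the abstract cluster expansion. Your additional verification of the polymer-model axioms and your remark that the identity is a priori one of formal power series, with convergence deferred to Section~\ref{section:ConvergenceOfTheClusterExpansion}, are consistent with (and slightly more explicit than) the paper's treatment.
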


Our algorithm is based on computing the \emph{truncated cluster expansion} for $\log(Z_G^g(\beta,\lambda))$:
\begin{equation}
    T_n(Z_G^g(\beta,\lambda)) \coloneqq -\hat{\beta}e_0\abs{V(G)}+\sum_{\substack{\Gamma\in\mathcal{G}_C \\ \abs{\bar{\Gamma}} < n}}\varphi(H_\Gamma)\prod_{\gamma\in\Gamma}w_\gamma^g, \notag
\end{equation}
where \mbox{$\abs{\bar{\Gamma}}\coloneqq\sum_{\gamma\in\Gamma}\abs{\bar{\gamma}}$}.

\section{Convergence of the Cluster Expansion}
\label{section:ConvergenceOfTheClusterExpansion}

In this section we shall establish criteria for the absolute convergence of the cluster expansion for $\log(Z_G^g(\beta,\lambda))$. In particular, we shall show that the polymer weights $w_\gamma^g$ satisfy a bound of the form \mbox{$\abs{w_\gamma^g} \leq e^{-\mu^\star\abs{\bar{\gamma}}}$} for a sufficiently large constant $\mu^\star$, which is sufficient to ensure absolute convergence of the cluster expansion~\cite{kotecky1986cluster}. Borgs, Koteck\'y, and Ueltschi~\cite{borgs1996low} established such a bound and we follow their analysis in our setting. We first establish criteria for the weights $w_\gamma$ to satisfy such a bound.
\begin{lemma}[{restate=[name=restatement]QuantumContourWeightExponentialDecay}]
    \label{lemma:QuantumContourWeightExponentialDecay}
    Let $\hat{\beta}>0$, $\lambda\in\mathbb{C}$, and $\alpha\geq0$ be such that \mbox{$\hat{\beta}\abs{\lambda}\leq\frac{e^{-2(\alpha+1)}}{2\nu+1}$}. Then, for all contours \mbox{$\gamma=(\bar{\gamma},\text{lab}_\gamma)$},
    \begin{align}
        \abs{w_\gamma} &\leq \left(d\left(e^{-\frac{\alpha}{2\nu}}+e^{-\hat{\beta}\alpha_0}\right)\right)^{\abs{\bar{\gamma}}}. \notag
    \end{align}
\end{lemma}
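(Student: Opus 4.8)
The plan is to start from the explicit form of the contour weight $w_\gamma$ furnished by \mbox{Lemma~\ref{lemma:QuantumContourExpansion}} (whose derivation via the Dyson series is carried out in the appendix), and to bound it by a product of local factors indexed by the spacetime sites of the support $\bar{\gamma}\subseteq V(\mathbb{Z}^\nu_m)$. Each site $v\in\bar{\gamma}$ belongs to the support for one of two reasons: either the underlying classical configuration is \emph{excited} at $v$ in the sense of Peierls' condition, or $v$ is the location of a \emph{quantum transition}, i.e.\ an insertion of a perturbation operator $\Psi(e)$ arising from expanding the Dyson series in the perturbation $\lambda H_\Psi$. The strategy is to bound the total contribution of each site by $d\bigl(e^{-\alpha/(2\nu)}+e^{-\hat{\beta}\alpha_0}\bigr)$ and then to take the product over the $\abs{\bar{\gamma}}$ sites.

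For the classical contribution I would use that between consecutive quantum transitions the evolution is diagonal in the basis $\ket{s}$, so over a time slice of width $\hat{\beta}$ an excited site contributes a factor $e^{-\hat{\beta}(\Phi_s(v)-e_0)}$, which is at most $e^{-\hat{\beta}\alpha_0}$ by Peierls' condition. Summing over the intermediate classical configuration at each site then contributes a factor of at most $d$, the dimension of the local Hilbert space; this is the source of the overall $d^{\abs{\bar{\gamma}}}$.

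The quantum contribution is the crux. Each quantum transition inserts some $\Psi(e)$ with $\norm{\Psi(e)}\leq 1$, carries a factor $\lambda$, and is integrated over an imaginary-time variable contributing a weight at most $\hat{\beta}$; bounding the $k$-th order time-ordered Dyson integral by $\hat{\beta}^k/k!$ and the operator product by submultiplicativity of the norm, the sum over insertion data reorganises into a product of per-site local sums. At a fixed site there are at most $2\nu$ incident edges on which a transition may act, so after charging each edge's weight $\hat{\beta}\abs{\lambda}$ to its endpoints, the per-site quantum factor is dominated by a geometric-type series in $\hat{\beta}\abs{\lambda}$ whose leading term is of order $2\nu\,\hat{\beta}\abs{\lambda}$. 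Under the hypothesis $\hat{\beta}\abs{\lambda}(2\nu+1)\leq e^{-2(\alpha+1)}$ one checks that this series sums to at most $e^{-\alpha/(2\nu)}$: indeed $2\nu\,\hat{\beta}\abs{\lambda}\leq e^{-2\alpha-2}\leq e^{-\alpha/(2\nu)}$ for $\nu\geq1$ and $\alpha\geq0$, with the exponential smallness of $\hat{\beta}\abs{\lambda}$ absorbing the higher-order corrections. Combining the two cases, each site contributes at most $d\bigl(e^{-\alpha/(2\nu)}+e^{-\hat{\beta}\alpha_0}\bigr)$, and the product over $\bar{\gamma}$ yields the stated bound.

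The hard part will be the decoupling that turns the globally time-ordered Dyson series into a product of independent per-site estimates. Quantum transitions live on edges shared by neighbouring sites and are coupled through a single global time ordering, so the per-site factorisation is not immediate; the resolution is to bound operator products by products of norms, to split each edge weight between its two endpoints, and to exploit the $1/k!$ from time ordering (equivalently, the discreteness of the $m$ slices) to tame the combinatorial proliferation of insertion patterns. The exponential smallness of $\hat{\beta}\abs{\lambda}$ guaranteed by the hypothesis is precisely what is needed to make the resulting per-site series converge to $e^{-\alpha/(2\nu)}$.
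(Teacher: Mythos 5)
There is a genuine gap, and it lies exactly where you placed your confidence: in the per-site accounting. Your dichotomy --- each site of $\bar{\gamma}$ is either classically excited (factor $e^{-\hat{\beta}\alpha_0}$ via Peierls) or a quantum transition site (factor $e^{-\alpha/(2\nu)}$) --- does not cover all of $\bar{\gamma}$. In the contour representation of \mbox{Lemma~\ref{lemma:QuantumContourExpansion}}, a spacetime site is excited (hence in $\bar{\gamma}$) as soon as \emph{any} site of its closed neighbourhood deviates from the ground state, and in particular every site adjacent to a quantum transition lies in $\bar{\gamma}$ even when its own spin agrees with $g$. Writing $U$ for the set of transition sites, the sites of $\mathcal{N}[U]{\setminus}U$ receive no Peierls factor --- the classical Hamiltonian in the weight is restricted to $\bar{\gamma}\setminus\mathcal{N}[U]$, since $\Phi_s$ is not even defined at sites neighbouring a quantum excitation --- and they are not transition sites, so in your scheme they receive no decay factor at all; the product over the $\abs{\bar{\gamma}}$ sites therefore does not close. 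This is also the true origin of the exponent $\alpha/(2\nu)$ in the paper: \mbox{Lemma~\ref{lemma:ExponentialDecayTransferMatrix}} gives the per-region bound $\abs{\matrixel{s}{T_{\mathcal{N}[U]}(U)}{\bar{s}}}\leq e^{-\hat{\beta}e_0\abs{\mathcal{N}[U]}}e^{-\alpha\abs{U}}$ --- ground-state energy only on $\mathcal{N}[U]$, decay only on $U$ --- and the decay is then spread over the neighbourhood via the surface-to-volume inequality $\abs{\mathcal{N}[U]}\leq 2\nu\abs{U}$, i.e.\ $e^{-\alpha\abs{U}}\leq e^{-\alpha\abs{\mathcal{N}[U]}/(2\nu)}$. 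The $2\nu$ is not, as in your account, a count of incident edges feeding a geometric series; your check $2\nu\hat{\beta}\abs{\lambda}\leq e^{-\alpha/(2\nu)}$ verifies the wrong inequality, since each transition site must in effect pay $e^{-\alpha/(2\nu)}$ for itself \emph{and} for each of its up to $2\nu$ neighbours that lost their Peierls factors.

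The second gap is that the decoupling you rightly flag as ``the hard part'' is asserted rather than performed --- and the paper's proof avoids per-site factorisation altogether, which is the cleaner way out. After inclusion-exclusion, $T(U)$ factorises \emph{exactly} over the connected components of the fattened quantum region (edges attached to distinct components share no vertices), and within a single component no further factorisation is attempted. Instead one bounds $T_{\mathcal{N}[U]}(U)$ in norm by counting: an edge set $S$ with $\text{supp}(S)=U$ has at least $\lceil\abs{U}/2\rceil$ edges (each edge covers only two sites of $U$ --- this, not edge-charging, is the rigorous step that makes the decay proportional to $\abs{U}$) and at most $\binom{\nu\abs{U}}{n}$ choices at size $n$; the Duhamel sequences with a given support are counted by Stirling numbers of the second kind; and the identity $\sum_n\binom{\nu\abs{U}}{n}\binom{k}{n}=\binom{\nu\abs{U}+k}{k}\leq\bigl(e(2\nu+1)\bigr)^k$ reduces everything to a single geometric series in $e(2\nu+1)\hat{\beta}\abs{\lambda}\leq e^{-(2\alpha+1)}$. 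The product form $\bigl(d(e^{-\alpha/(2\nu)}+e^{-\hat{\beta}\alpha_0})\bigr)^{\abs{\bar{\gamma}}}$ then emerges only at the last step, from the elementary identity $\sum_{U\subseteq\bar{\gamma}}x^{\abs{U}}=(1+x)^{\abs{\bar{\gamma}}}$ applied to the sum over quantum regions together with the sum over classical configurations, not from a genuine per-site estimate. To repair your outline, prove a norm bound for whole connected quantum regions first, keep separate books for $U$ and $\mathcal{N}[U]$, and let the per-site structure appear in the final resummation.
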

We prove \mbox{Lemma~\ref{lemma:QuantumContourWeightExponentialDecay}} in \mbox{Appendix~\ref{section:QuantumContourWeightExponentialDecay}}. While the polymer weights $w_\gamma^g$ do not satisfy the desired bound in general, it is possible to obtain such a bound when the ground states are stable. To do this we follow Ref.~\cite{borgs1996low}, which adapts a method due to Borgs and Imbrie~\cite{borgs1989unified} that is based on 
Zahradn{\'i}k's truncation approach~\cite{zahradnik1984alternate}. A contour $\gamma$ of type $g$ is \emph{stable} if
\begin{equation}
    \abs{\frac{Z_{\text{int}_{g'}\bar{\gamma}}^{g'}(\beta,\lambda)}{Z_{\text{int}_{g'}\bar{\gamma}}^{g}(\beta,\lambda)}} \leq e^{4\abs{\partial\text{int}_{g'}\bar{\gamma}}}. \notag
\end{equation}
for all $g'\in\Xi$. Let $\mathcal{G}_\text{stab}^g$ denote the collection of all sets of pairwise compatible contours that are stable and of type $g$. We define the \emph{truncated partition function} $T_G^g(\beta,\lambda)$ by
\begin{equation}
    T_G^g(\beta,\lambda) \coloneqq e^{-\hat{\beta}e_0\abs{V(G)}}\sum_{\Gamma\in\mathcal{G}_\text{stab}^g}\prod_{\gamma\in\Gamma}w_\gamma^g. \notag
\end{equation}
If Peierls' condition holds and the weights $w_\gamma$ satisfy \mbox{$\abs{w_\gamma} \leq e^{-\mu\abs{\bar{\gamma}}}$} for $\mu$ sufficiently large, then the cluster expansion for $\log(T_G^g(\beta,\lambda))$ converges absolutely and the free energy $f_{\mathbb{Z}^\nu}^g(\beta,\lambda)$ in the infinite volume limit exists for each ground state $g\in\Xi$, i.e.,
\begin{equation}
    f_{\mathbb{Z}^\nu}^g(\beta,\lambda) \coloneqq -\frac{1}{\beta}\lim_{\scriptstyle{G\to\mathbb{Z}^\nu}}\frac{1}{\abs{V(G)}}\log\left(T_G^g(\beta,\lambda)\right), \notag
\end{equation}
where the limit is taken in the sense of van Hove. A ground state $g$ is stable if \mbox{$\Re(f_{\mathbb{Z}^\nu}^g(\beta,\lambda)) \leq \Re(f_{\mathbb{Z}^\nu}^{g'}(\beta,\lambda))$} for all $g'\in\Xi$. The following lemma is obtained in Ref.~\cite[Section 5]{borgs1996low} by applying the adaption of Ref.~\cite{borgs1989unified}.
\begin{lemma}
    \label{lemma:QuantumStableGroundStatesPolymerWeight}
    Suppose that the weights $w_\gamma$ satisfy \mbox{$\abs{w_\gamma} \leq e^{-\mu\abs{\bar{\gamma}}}$}. Then, there exists a constant \mbox{$\mu_0=\mu_0(\nu,d,\Xi,\alpha_0)$}, such that, for all $\mu>\mu_0$, all stable ground states $g\in\Xi$, and all contours $\gamma$,
    \begin{equation}
        \prod_{g'\in\Xi}\abs{\frac{Z_{\text{int}_{g'}\bar{\gamma}}^{g'}(\beta,\lambda)}{Z_{\text{int}_{g'}\bar{\gamma}}^{g}(\beta,\lambda)}} \leq e^{2(\nu+1)\abs{\bar{\gamma}}}. \notag
    \end{equation}
\end{lemma}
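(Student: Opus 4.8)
The plan is to follow Zahradn\'ik's truncation method~\cite{zahradnik1984alternate}, in the form adapted by Borgs and Imbrie~\cite{borgs1989unified} and carried out in Ref.~\cite{borgs1996low}, and to argue by strong induction on the level $l_\gamma$. The difficulty that forces this route is that the polymer weights $w_\gamma^g$ are themselves built from the ratios $Z_{\text{int}_{g'}\bar{\gamma}}^{g'}/Z_{\text{int}_{g'}\bar{\gamma}}^{g}$ appearing in the lemma, so a direct estimate would be circular. Truncation breaks the circularity: on subregions one works with the truncated partition function $T_\Lambda^h$, which retains only the stable contours of $\mathcal{G}_\text{stab}^h$. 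The inductive claim I would carry is the sharp per-factor bound $\abs{Z_{\text{int}_{g'}\bar{\gamma}}^{g'}/Z_{\text{int}_{g'}\bar{\gamma}}^{g}}\le e^{2c_1\abs{\partial^\text{in}\text{int}_{g'}\bar{\gamma}}}$ for every $g'\in\Xi$, valid for all contours $\gamma$ of type $g$ with $g$ stable and $l_\gamma\le l$, where $c_1$ is a surface constant tending to $0$ as $\mu\to\infty$. This single claim does double duty: since $\abs{\partial^\text{in}\text{int}_{g'}\bar{\gamma}}\le\abs{\partial\text{int}_{g'}\bar{\gamma}}$ and $c_1$ is small, it implies the defining stability inequality for $\gamma$, and hence that $Z_\Lambda^{g}=T_\Lambda^{g}$ on any region $\Lambda$ all of whose type-$g$ contours have smaller level; and, on multiplying over $g'$ and using the geometric estimate $\sum_{g'\in\Xi}\abs{\partial^\text{in}\text{int}_{g'}\bar{\gamma}}\le c(\nu)\abs{\bar{\gamma}}$, it yields the bound of the lemma provided $2c_1c(\nu)\le2(\nu+1)$.

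First I would secure the analytic input. Combining the Peierls-type bound $\abs{w_\gamma}\le e^{-\mu\abs{\bar{\gamma}}}$ of Lemma~\ref{lemma:QuantumContourWeightExponentialDecay} with the stability inequality $\abs{Z_{\text{int}_{g'}\bar{\gamma}}^{g'}/Z_{\text{int}_{g'}\bar{\gamma}}^{g}}\le e^{4\abs{\partial\text{int}_{g'}\bar{\gamma}}}$ obeyed by the contours retained in $T_\Lambda^h$, together with a geometric estimate bounding the total interior boundary of $\bar{\gamma}$ by $c(\nu)\abs{\bar{\gamma}}$, the truncated polymer weights decay like $e^{-(\mu-c')\abs{\bar{\gamma}}}$. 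For $\mu$ above a threshold $\mu_0(\nu,d,\Xi,\alpha_0)$ the Koteck\'y--Preiss criterion~\cite{kotecky1986cluster} then gives absolute convergence of the cluster expansion of $\log T_\Lambda^h$; the truncated free energies $f_{\mathbb{Z}^\nu}^h$ exist, and one reads off, uniformly in $\Lambda$, both a lower bound $\abs{T_\Lambda^{h}}\ge e^{-\beta\Re f_{\mathbb{Z}^\nu}^{h}\abs{V(\Lambda)}-c_1\abs{\partial^\text{in}\Lambda}}$ and a matching upper bound with $+c_1\abs{\partial^\text{in}\Lambda}$, the surface error coming from the clusters that meet $\partial^\text{in}\Lambda$. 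Crucially $c_1=c_1(\mu)\to0$ as $\mu\to\infty$.

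For the inductive step, fix a contour $\gamma$ of type $g$ with $g$ stable, fix $g'\in\Xi$, and set $\Lambda=\text{int}_{g'}\bar{\gamma}$. The contours supported in $\Lambda$ have level below $l_\gamma$, so the inductive claim applies inside $\Lambda$. For the denominator, the type-$g$ contours in $\Lambda$ are then stable, so $Z_\Lambda^{g}=T_\Lambda^{g}$ and the lower bound $\abs{Z_\Lambda^{g}}\ge e^{-\beta\Re f_{\mathbb{Z}^\nu}^{g}\abs{V(\Lambda)}-c_1\abs{\partial^\text{in}\Lambda}}$ holds. For the numerator, with a possibly unstable boundary condition $g'$, I would instead establish the \emph{a priori} upper bound $\abs{Z_\Lambda^{g'}}\le e^{-\beta\Re f_{\mathbb{Z}^\nu}^{\min}\abs{V(\Lambda)}+c_1\abs{\partial^\text{in}\Lambda}}$, where $f_{\mathbb{Z}^\nu}^{\min}=\min_{h\in\Xi}\Re f_{\mathbb{Z}^\nu}^{h}$: expanding $Z_\Lambda^{g'}$ in external contours and applying the inductive claim to their interiors, the contribution of each phase $h$ is governed by $f_{\mathbb{Z}^\nu}^{h}$, and since $\Xi$ is finite the minimal free energy dominates. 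As $g$ is stable, $\Re f_{\mathbb{Z}^\nu}^{g}=f_{\mathbb{Z}^\nu}^{\min}$, so the two extensive terms cancel and only the surface terms remain, reproducing $\abs{Z_\Lambda^{g'}/Z_\Lambda^{g}}\le e^{2c_1\abs{\partial^\text{in}\Lambda}}$ at level $l_\gamma$ and closing the induction. The base case $l_\gamma=0$ is immediate, since then $\text{int}\bar{\gamma}=\varnothing$ and the product is empty.

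I expect the main obstacle to be the interlocking of these ingredients rather than any isolated estimate: the free-energy bounds presuppose the convergent truncated expansion, whose convergence presupposes the stability of the contours already present inside $\Lambda$, which is exactly what the induction supplies, and keeping this dependence well-founded is delicate. The genuinely hard point is the \emph{a priori} upper bound on $Z_\Lambda^{g'}$ for an unstable $g'$, where the stable phase dominates through large contours and the contribution of the unstable contours — which are \emph{not} suppressed by the Peierls weight alone — must be controlled; this is the core of Zahradn\'ik's argument. Finally, one must track the surface constants so that, after summing over the finitely many ground states of $\Xi$, the accumulated exponent falls below $2(\nu+1)\abs{\bar{\gamma}}$, and this requirement is what fixes the admissible value of $\mu_0$.
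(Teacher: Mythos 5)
Your proposal is correct and takes essentially the same approach as the paper: the paper does not reprove this lemma but obtains it from Ref.~\cite[Section 5]{borgs1996low} via the Borgs--Imbrie~\cite{borgs1989unified} adaptation of Zahradn\'ik's truncation method~\cite{zahradnik1984alternate}, which is precisely the scheme you reconstruct (truncated partition functions over stable contours, an induction that supplies well-foundedness, volume/surface splitting of $\log\abs{Z_\Lambda^h}$ via metastable free energies, the a priori upper bound for unstable boundary conditions, and cancellation of the volume terms when $g$ is stable). Your identification of the a priori bound on $Z_\Lambda^{g'}$ for unstable $g'$ as the core difficulty, and the bookkeeping of surface constants against $2(\nu+1)\abs{\bar{\gamma}}$, matches the structure of the cited argument.
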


We now establish criteria for the absolute convergence of the cluster expansion for $\log(Z_G^g(\beta,\lambda))$.
\begin{lemma}[{restate=[name=restatement]ConvergenceClusterExpansion}]
    \label{lemma:ConvergenceClusterExpansion}
    There exists constants \mbox{$\beta^\star=\beta^\star(\nu,d,\Xi,\alpha_0)$} and \mbox{$\lambda^\star=\lambda^\star(\nu,d,\Xi,\alpha_0)$}, such that, for all $g\in\Xi$, all $\beta\geq\beta^\star$, and all $\abs{\lambda}\leq\lambda^\star$, the cluster expansion for $\log(Z_G^g(\beta,\lambda))$ converges absolutely, \mbox{$Z_G^g(\beta,\lambda)\neq0$}, and for $n\in\mathbb{Z}^+$,
    \begin{equation}
        \abs{T_n(Z_G^g(\beta,\lambda))-\log(Z_G^g(\beta,\lambda))} \leq \abs{V}e^{-\Omega(n)}. \notag
    \end{equation}
\end{lemma}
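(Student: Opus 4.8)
The plan is to combine the contour weight bound from Lemma~\ref{lemma:QuantumContourWeightExponentialDecay} with the stability estimate of Lemma~\ref{lemma:QuantumStableGroundStatesPolymerWeight} to show that the polymer weights $w_\gamma^g$ satisfy $\abs{w_\gamma^g} \leq e^{-\mu^\star\abs{\bar\gamma}}$ for an arbitrarily large constant $\mu^\star$, and then invoke the standard Koteck\'y--Preiss convergence criterion~\cite{kotecky1986cluster}. First I would fix $\alpha$ large and choose $\beta^\star$ and $\lambda^\star$ so that both hypotheses are met simultaneously: the condition $\hat\beta\abs{\lambda}\leq\frac{e^{-2(\alpha+1)}}{2\nu+1}$ of Lemma~\ref{lemma:QuantumContourWeightExponentialDecay} and the requirement $\mu>\mu_0$ of Lemma~\ref{lemma:QuantumStableGroundStatesPolymerWeight}. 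Recalling $\hat\beta=\beta/m$, the parameter $m$ must be chosen (as a function of $\beta$ and $\lambda$) so that $\hat\beta\abs{\lambda}$ is small while $\hat\beta\alpha_0$ stays large enough to make $e^{-\hat\beta\alpha_0}$ negligible; this is the technical role of $m$ alluded to after the definition $\hat\beta=\beta/m$. With $\alpha$ large and $\beta\geq\beta^\star$, Lemma~\ref{lemma:QuantumContourWeightExponentialDecay} gives $\abs{w_\gamma}\leq e^{-\mu\abs{\bar\gamma}}$ with $\mu$ as large as we like.

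Next I would bound the polymer weight directly from its definition $w_\gamma^g = w_\gamma\prod_{g'\in\Xi}\frac{Z_{\text{int}_{g'}\bar\gamma}^{g'}}{Z_{\text{int}_{g'}\bar\gamma}^{g}}$. Since $g$ is stable, Lemma~\ref{lemma:QuantumStableGroundStatesPolymerWeight} yields $\prod_{g'\in\Xi}\bigabs{\frac{Z_{\text{int}_{g'}\bar\gamma}^{g'}}{Z_{\text{int}_{g'}\bar\gamma}^{g}}} \leq e^{2(\nu+1)\abs{\bar\gamma}}$, so that
\begin{equation}
    \abs{w_\gamma^g} \leq e^{-\mu\abs{\bar\gamma}}\,e^{2(\nu+1)\abs{\bar\gamma}} = e^{-(\mu-2(\nu+1))\abs{\bar\gamma}}. \notag
\end{equation}
Setting $\mu^\star \coloneqq \mu - 2(\nu+1)$ and taking $\mu$ large enough (which fixes the final values of $\beta^\star,\lambda^\star$, all depending only on $\nu,d,\Xi,\alpha_0$) gives $\abs{w_\gamma^g}\leq e^{-\mu^\star\abs{\bar\gamma}}$ for $\mu^\star$ exceeding the Koteck\'y--Preiss threshold. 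The number of contours of size $k$ containing a fixed vertex grows at most like $C^k$ for a lattice-dependent constant $C$, so the decay rate can be made to dominate this entropy, which is exactly the hypothesis needed for absolute convergence of the cluster expansion and hence for $Z_G^g(\beta,\lambda)\neq0$.

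Finally, for the truncation error I would use the standard tail estimate for convergent cluster expansions: under the Koteck\'y--Preiss condition the contribution of clusters with $\abs{\bar\Gamma}\geq n$ to $\log(Z_G^g(\beta,\lambda))$ is bounded by a sum that decays geometrically in $n$. Concretely, each vertex of $V$ contributes a geometrically convergent tail $\sum_{k\geq n} C^k e^{-\mu^\star k}$, and summing over the $\abs{V}$ vertices gives the claimed bound $\abs{V}e^{-\Omega(n)}$; the factor $\abs{V}$ comes from anchoring each cluster at one of its vertices and the $e^{-\Omega(n)}$ from the geometric decay once $\mu^\star$ beats $\log C$. The main obstacle is the first paragraph: verifying that a single choice of the parameters $\alpha$, $m$, $\beta^\star$, and $\lambda^\star$ can simultaneously satisfy the Peierls-type smallness condition of Lemma~\ref{lemma:QuantumContourWeightExponentialDecay} and push $\mu$ above $\mu_0 + 2(\nu+1)$, since these requirements pull $\hat\beta$ in competing directions and the coupling of $m$ to both $\beta$ and $\lambda$ must be tracked carefully. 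Once the weight bound $\abs{w_\gamma^g}\leq e^{-\mu^\star\abs{\bar\gamma}}$ is secured, the convergence, non-vanishing, and truncation-error claims follow from established cluster-expansion machinery.
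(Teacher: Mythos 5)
Your proposal is correct and follows essentially the same route as the paper: both combine Lemma~\ref{lemma:QuantumContourWeightExponentialDecay} with Lemma~\ref{lemma:QuantumStableGroundStatesPolymerWeight} to obtain \mbox{$\abs{w_\gamma^g}\leq e^{-\mu^\star\abs{\bar{\gamma}}}$} for $\mu^\star$ large, and then conclude via standard cluster-expansion machinery (the paper simply cites Ref.~\cite[Lemma 2.1]{borgs2020efficient} for the convergence, non-vanishing, and tail bound that you re-derive from the Koteck\'y--Preiss criterion by anchoring clusters at vertices). The parameter issue you flag as the main obstacle is resolved exactly by the mechanism you describe: the paper first fixes $\alpha$ and $\beta^\star$ so that \mbox{$d\left(e^{-\frac{\alpha}{2\nu}}+e^{-\beta^\star\alpha_0}\right)\leq e^{-(\mu^\star+2(\nu+1))}$}, then chooses $m$ so that \mbox{$\hat{\beta}\in[\beta^\star,2\beta^\star)$}, and only then takes $\lambda^\star$ small depending on $\alpha$, $\beta^\star$, and $\nu$, so pinning $\hat{\beta}$ to a bounded window means the two requirements never actually compete.
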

We prove \mbox{Lemma~\ref{lemma:ConvergenceClusterExpansion}} in \mbox{Appendix~\ref{section:ConvergenceClusterExpansion}}. This lemma implies that to obtain a multiplicative $\epsilon$-approximation to $Z_G^g(\beta,\lambda)$, it is sufficient to approximate the truncated cluster expansion $T_n(Z_G^g(\beta,\lambda))$ to order \mbox{$n=O(\log(\abs{V(G)}/\epsilon))$}.

\section{Approximation Algorithm}
\label{section:ApproximationAlgorithm}

In this section we shall establish our approximation algorithm for $Z_G^g(\beta,\lambda)$. Our algorithm is based on combining the cluster expansion for $\log(Z_G^g(\beta,\lambda))$ with the algorithmic framework of Helmuth, Perkins, and Regts~\cite{helmuth2020algorithmic} and Borgs et al.~\cite{borgs2020efficient}. In particular, we shall establish an efficient algorithm for approximating the truncated cluster expansion $T_n(Z_G^g(\beta,\lambda))$ to order \mbox{$n=O(\log(\abs{V(G)}/\epsilon))$}. We require the following lemmas.
\begin{lemma}
    \label{lemma:ListContoursByLevel}
    The contours of size at most $n$ can be listed and ordered by level in time $\exp(O(n))\cdot\abs{V(G)}^{O(1)}$.
\end{lemma}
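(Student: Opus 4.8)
The plan is to enumerate the contours in two nested stages—first by their supports, then by their labellings—and to exploit the fact that every contour has a finite \emph{connected} support in $\mathbb{Z}^\nu_m$. The key combinatorial input is that the number of connected vertex subsets of size $k$ in $\mathbb{Z}^\nu_m$ containing a fixed vertex is bounded by $c^k$ for a lattice-dependent constant $c$ (a standard fact for bounded-degree graphs, since connected subgraphs correspond to subtrees of a spanning structure and can be generated by a branching process of bounded degree). First I would fix a reference ordering of the vertices of $V(G)$ and, for each anchor vertex $v$ and each size $k \le n$, enumerate all connected subsets $\bar\gamma$ of size $k$ whose minimal vertex (in the fixed ordering) is $v$; this guarantees each support is produced exactly once. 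Since there are $\abs{V(G)}$ choices of anchor and $c^k \le c^n$ connected sets per anchor, the total number of supports is $\exp(O(n))\cdot\abs{V(G)}^{O(1)}$, and each can be produced in time polynomial in its size by a standard connected-subgraph enumeration routine (e.g.\ a reverse-search or branching enumeration).

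Next, for each support $\bar\gamma$ I would enumerate its admissible labellings. The labelling function assigns a ground state $g\in\Xi$ to each edge of $\partial\bar\gamma$, subject to the constraint that $\text{lab}_\gamma$ is constant on the boundary of each connected component of $V(\mathbb{Z}^\nu_m)\setminus\bar\gamma$. Because $\abs{\Xi}<\infty$ and $\abs{\partial\bar\gamma}=O(\abs{\bar\gamma})=O(n)$, there are at most $\abs{\Xi}^{O(n)}=\exp(O(n))$ candidate labellings; I would compute the connected components of the complement (itself a polynomial-time graph operation on the relevant $O(n)$-neighbourhood of $\bar\gamma$), assign one label per component, and discard any that violate the constancy condition. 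This keeps the per-support cost within $\exp(O(n))\cdot\abs{V(G)}^{O(1)}$, and multiplying by the number of supports preserves the claimed overall bound.

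Finally, to order the listed contours by level I would use the inductive definition $l_\gamma=0$ when $\operatorname{int}\bar\gamma=\varnothing$ and $l_\gamma=1+\max\{l_{\gamma'}\mid\bar\gamma'\subseteq\operatorname{int}\bar\gamma\}$. Since $\operatorname{int}\bar\gamma$ is determined by $\bar\gamma$ via the finite-component decomposition of the complement, and every interior contour has strictly smaller support (hence size $<n$ and already on the list), the level of each contour is a polynomial-time computable function of data we have already generated; a single pass—or a sort keyed on these computed levels—then produces the ordered list within the stated time bound. I expect the main obstacle to be verifying the constant-per-anchor enumeration bound for supports, i.e.\ confirming that connected subsets of $\mathbb{Z}^\nu_m$ can be listed with only $\exp(O(n))$ total work while producing each \emph{exactly once}; the labelling and level steps are essentially bookkeeping on top of this, but care is needed to ensure that the periodicity in the final coordinate of $\mathbb{Z}^\nu_m$ does not spoil the bounded-degree/bounded-growth property that underlies the $c^k$ count.
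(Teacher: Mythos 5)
The paper's own ``proof'' of this lemma is a one-line citation to Ref.~\cite[Proposition 3.12]{borgs2020efficient}, and your listing scheme --- anchored enumeration of connected subsets in a bounded-degree lattice, followed by enumeration of boundary labellings with the constancy condition checked on components of the complement --- is essentially the argument behind that proposition, made explicit. Two small inaccuracies in that part: contour supports are subsets of $V(\mathbb{Z}^\nu_m)$, so your anchors should range over $V(G)\times[m]$ rather than $V(G)$ (harmless, since $m$ is a constant determined by $\beta$ and $\beta^\star$); and the finite components of the complement need not lie in an $O(n)$-neighbourhood of $\bar{\gamma}$, but they do lie inside the bounding box of $\bar{\gamma}$ (of volume polynomial in $n$), which is all the component computation needs.

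The genuine gap is in the ordering step. Your parenthetical claim that every contour $\gamma'$ with $\bar{\gamma}'\subseteq\operatorname{int}\bar{\gamma}$ ``has strictly smaller support (hence size $<n$ and already on the list)'' is false: interiors can be polynomially larger than supports. For example, if $\bar{\gamma}$ is the boundary shell of a spatial cube of side $L$ crossed with the full time circle, then $\abs{\bar{\gamma}}=\Theta(L^{\nu-1}m)$ while $\abs{\operatorname{int}\bar{\gamma}}=\Theta(L^{\nu}m)$, so contours nested inside $\gamma$ may have size far exceeding $n$ and in particular be absent from your list. Consequently, the quantity computed by your recursion over listed contours is \emph{not} the level $l_\gamma$ as defined in the paper --- that definition takes a maximum over all contours supported in $\operatorname{int}\bar{\gamma}$, of arbitrary size --- and the true level cannot be obtained from the list of contours of size at most $n$ by your recursion. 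What you compute is a level \emph{relative to the list}. This is repairable rather than fatal: the downstream induction in the proof of Lemma~\ref{lemma:TruncatedQuantumClusterExpansionApproximationAlgorithm} only requires an ordering of the listed contours that is a linear extension of the nesting relation $\bar{\gamma}'\subseteq\operatorname{int}\bar{\gamma}$, and your restricted level does provide such an extension, since nesting among listed contours strictly decreases it. So your algorithm and time bound stand once the claim is restated as ``ordered consistently with nesting'' (or the level is redefined relative to the listed contours); as literally written, however, the step asserting that all interior contours are already on the list, and hence that the paper's level is computable from the generated data, does fail.
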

\begin{proof}
    This is essentially Ref.~\cite[Proposition 3.12]{borgs2020efficient}; this reference concerns a torus, but the argument also works in our context of $\mathbb{Z}^\nu_m$.
\end{proof}

\begin{lemma}
    \label{lemma:ListClustersAlgorithm}
    The clusters of size at most $n$ can be listed in time \mbox{$\exp(O(n))\cdot\abs{V(G)}^{O(1)}$}.
\end{lemma}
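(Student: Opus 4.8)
The goal is to enumerate all clusters of size at most $n$, where a cluster $\Gamma$ is an ordered tuple of polymers whose incompatibility graph $H_\Gamma$ is connected, and the size $|\bar{\Gamma}| = \sum_{\gamma\in\Gamma}|\bar{\gamma}|$ is at most $n$. The plan is to bootstrap from Lemma~\ref{lemma:ListContoursByLevel}, which already provides a list of all contours of size at most $n$ in time $\exp(O(n))\cdot|V(G)|^{O(1)}$, and then assemble these contours into connected clusters by a greedy growth procedure. This is the standard approach used in Refs.~\cite{helmuth2020algorithmic, borgs2020efficient}, so the main task is to verify that it transfers to our contour model on $\mathbb{Z}^\nu_m$.

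First I would observe that any cluster $\Gamma$ with $|\bar{\Gamma}| < n$ consists of contours each of size at most $n$, all of which appear on the list produced by Lemma~\ref{lemma:ListContoursByLevel}. The key structural fact I would exploit is that the incompatibility graph $H_\Gamma$ is connected: by the definition of compatibility, two contours $\gamma,\gamma'$ are incompatible precisely when $\mathbb{Z}^\nu_m[\bar{\gamma}\cup\bar{\gamma}']$ is connected, i.e.\ their supports are adjacent or overlapping. Hence the union $\bigcup_{\gamma\in\Gamma}\bar{\gamma}$ of the supports of a cluster is a connected subset of $V(\mathbb{Z}^\nu_m)$ of size at most $n$ (counting with multiplicity it is bounded by $n$, and its set-theoretic support is likewise bounded). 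This connectivity is what keeps the enumeration tractable: one grows a cluster one polymer at a time, at each step adjoining only a contour that is incompatible with at least one contour already selected.

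The concrete algorithm I would present is the following. For each contour $\gamma_0$ on the list (a seed), perform a breadth-first or depth-first growth: maintain a partial cluster together with the set of contours incompatible with some member of the partial cluster, and recursively extend by each such contour, pruning any branch once the accumulated size reaches $n$. To bound the running time I would use two facts. The number of contours of size at most $n$ whose support meets a fixed vertex is $\exp(O(n))$, a consequence of the standard Peierls-type counting bound for connected subsets of a bounded-degree lattice (the degree of $\mathbb{Z}^\nu_m$ is bounded by $2\nu+1$); this controls the branching factor at each growth step. Combined with the fact that a cluster has at most $n$ polymers and total support size at most $n$, the total number of clusters through a given seed is $\exp(O(n))$, and summing over the $\exp(O(n))\cdot|V(G)|^{O(1)}$ seeds gives the claimed bound $\exp(O(n))\cdot|V(G)|^{O(1)}$.

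The main obstacle, and the point requiring the most care, is avoiding the multiple-counting inherent in enumerating \emph{ordered} tuples and in the fact that the same cluster can be grown from many different seeds and via many different growth orders. The standard fix, which I would follow, is to impose a canonical order on the contour list (for instance by the lexicographic order on a canonical representative vertex together with level and support) and to grow each cluster only in the canonical increasing order, discarding any branch that would add a contour violating the canonical order relative to the current partial cluster; one then multiplies by the appropriate combinatorial factor, or enumerates the ordered versions at the end, to recover clusters as ordered tuples as required by the cluster expansion. Verifying that this canonicalisation still generates every connected cluster exactly once, while preserving the $\exp(O(n))$ branching bound, is the crux of the argument. As this is essentially identical to the analysis of Ref.~\cite[Section 3]{borgs2020efficient}, with $\mathbb{Z}^\nu_m$ playing the role of the torus and our compatibility relation replacing theirs, the verification is routine once the connectivity and counting bounds above are in place.
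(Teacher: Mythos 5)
Your overall strategy---bootstrap from Lemma~\ref{lemma:ListContoursByLevel} and assemble the listed contours into connected clusters by a growth procedure---is the same one underlying the result the paper simply cites for this lemma (Ref.~\cite[Theorem 6]{helmuth2020algorithmic}), but the step you yourself identify as the crux is wrong as stated. You generate clusters by adding contours one at a time subject to two simultaneous constraints: each new contour must be incompatible with some contour already chosen (so every prefix is connected), and contours must be added in canonical increasing order. These constraints cannot in general be met simultaneously, so the procedure misses clusters. Concretely, take three contours where $\gamma_b$ is incompatible with both $\gamma_a$ and $\gamma_c$ but $\gamma_a$ and $\gamma_c$ are compatible (e.g.\ $\gamma_b$ a large contour touching two small, far-apart contours), and suppose the canonical order puts $\gamma_b$ last. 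The unique increasing ordering is $(\gamma_a,\gamma_c,\gamma_b)$, whose prefix $\{\gamma_a,\gamma_c\}$ is disconnected, so every branch producing this cluster is pruned, yet it is a perfectly good cluster. Dropping the canonical-order constraint instead reintroduces the overcounting you were trying to kill: a star-shaped cluster (one contour of size $n/2$ touched by $k=\Omega(n)$ pairwise-compatible contours of size $1$) has roughly $k!=\exp(\Theta(n\log n))$ admissible growth orders, so the enumeration is no longer $\exp(O(n))$. Your fallback of ``enumerating the ordered versions at the end'' fails for the same factorial reason; the only viable reading of the lemma is to list clusters up to reordering (as multisets) and exploit that $\varphi(H_\Gamma)\prod_{\gamma\in\Gamma}w_\gamma$ is permutation invariant, multiplying by the number of orderings $k!/(m_1!\cdots m_j!)$ when the truncated expansion is assembled.

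The clean repair, and essentially what the cited proof does, is to enumerate supports before contours rather than canonicalising growth orders. First list all connected subsets $S$ of $V(\mathbb{Z}^\nu_m)$ with $|S|\leq n$ meeting the region relevant to $G$; by the standard bounded-degree counting argument there are $\exp(O(n))$ of these per root vertex, hence $\exp(O(n))\cdot|V(G)|^{O(1)}$ in total, and they can be listed in that time. For each such $S$, enumerate all multisets of contours whose supports lie in $S$, have union \emph{exactly} $S$, and have total size at most $n$: since the number of contours of size $s$ supported inside $S$ is at most $|S|\,c^{s}$ for a constant $c=c(\nu,|\Xi|)$, a generating-function estimate gives $\exp(O(n))$ such multisets per $S$, enumerable directly. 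Finally discard any multiset whose incompatibility graph is disconnected (a $\mathrm{poly}(n)$ check, since incompatibility of two contours is just connectivity of the union of their supports). Because a cluster determines its support-union $S$ uniquely, each cluster is output exactly once, and no canonicalisation of growth orders is needed. With this replacement, your counting bounds and the reduction to Lemma~\ref{lemma:ListContoursByLevel} go through and recover the stated time bound.
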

\begin{proof}
    This is Ref.~\cite[Theorem 6]{helmuth2020algorithmic}.
\end{proof}

\begin{lemma}
    \label{lemma:UrsellFunctionAlgorithm}
    The Ursell function $\varphi(H)$ can be computed in time $\exp(O(\abs{V(H)}))$.
\end{lemma}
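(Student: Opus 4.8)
The plan is to avoid the naive evaluation of the defining sum, which ranges over all $E \subseteq E(H)$ and hence costs $2^{O(\abs{V(H)}^2)}$ since $H$ may be dense, and instead to compute the signed count of spanning connected subgraphs by a dynamic program over vertex subsets. Write $n \coloneqq \abs{V(H)}$ and, for $S \subseteq V(H)$, set
\[
    c(S) \coloneqq \sum_{\substack{F \subseteq E(H[S]) \\ (S,F)\ \text{connected}}} (-1)^{\abs{F}},
\]
so that $\varphi(H) = c(V(H))/n!$ directly from the definition.

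The key identity is that every spanning subgraph of $H[S]$ factorises uniquely through the set partition of $S$ into its connected components. Evaluating $\prod_{e \in E(H[S])}(1+x)$ at $x=-1$ then yields $\bigl[\,E(H[S])=\varnothing\,\bigr] = \sum_{\pi \vdash S} \prod_{B \in \pi} c(B)$, where the sum ranges over set partitions $\pi$ of $S$. Writing $g(S) \coloneqq \bigl[\,E(H[S])=\varnothing\,\bigr]$ and fixing any pivot vertex $v \in S$, I would split off the unique block containing $v$ to obtain the subset convolution
\[
    g(S) = \sum_{v \in T \subseteq S} c(T)\, g(S \setminus T),
\]
with the convention $g(\varnothing)=1$. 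Isolating the term $T=S$ gives the recursion
\[
    c(S) = g(S) - \sum_{\substack{v \in T \subseteq S \\ T \neq S}} c(T)\, g(S \setminus T),
\]
which expresses $c(S)$ in terms of values $c(T)$ on strictly smaller vertex sets.

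The algorithm is then as follows. First precompute $g(S)$ for all $S \subseteq V(H)$ by testing, for each $S$, whether some edge of $H$ has both endpoints in $S$; this costs $O(n^2 2^n)$. Then compute $c(S)$ for all $S$ in order of increasing cardinality via the recursion above, taking $v$ to be the least vertex of $S$, and finally output $c(V(H))/n!$. The dominant cost is the recursion, whose total work is proportional to the number of pairs $(S,T)$ with $T \subseteq S \subseteq V(H)$, namely $\sum_{S} 2^{\abs{S}} = 3^n = \exp(O(n))$; the final division by $n!$ is well within budget, giving the claimed running time.

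The one point requiring care — and the only real content — is the factorisation identity together with the pivot decomposition: without fixing a pivot one would instead sum over all set partitions of each $S$, incurring a Bell-number factor and hence a running time of $\exp(O(n\log n))$ rather than $\exp(O(n))$. I do not anticipate any genuine obstacle beyond setting up this decomposition correctly, since correctness of the output follows immediately from the definition of $\varphi$ and the identity, with no further estimates needed.
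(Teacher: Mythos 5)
Your algorithm is correct, but it takes a genuinely different route from the paper. The paper gives no algorithm at all: its proof is a citation to the result of Bj\"orklund, Husfeldt, Kaski, and Koivisto that the Tutte polynomial can be computed in vertex-exponential time, as packaged in Lemma~5 of Helmuth--Perkins--Regts; the point there is that for connected $H$ one has $\varphi(H)=\frac{(-1)^{|V(H)|-1}}{|V(H)|!}\,T_H(1,0)$ (and $\varphi(H)=0$ for disconnected $H$), so a single Tutte evaluation suffices. You replace this black box with a self-contained subset dynamic program, and all the key steps check out: the identity $\sum_{F\subseteq E(H[S])}(-1)^{|F|}=\bigl[\,E(H[S])=\varnothing\,\bigr]$, the factorisation of the signed spanning-subgraph sum over connected components, the pivot trick that converts the partition identity into the subset convolution $g(S)=\sum_{v\in T\subseteq S}c(T)\,g(S\setminus T)$ (thereby avoiding the Bell-number enumeration, as you note), and the $\sum_{S}2^{|S|}=3^{|V(H)|}$ cost of the recursion. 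What your route buys is an elementary, verifiable proof tailored to the one evaluation actually needed; what the paper's route buys is brevity and the full generality of the Tutte-polynomial machinery, which your argument does not recover. One minor point worth adding for completeness: the intermediate integers $c(S)$ can be as large as $2^{|E(H[S])|}$, hence have $O(|V(H)|^2)$ bits, so each arithmetic operation costs a polynomial factor; this does not affect the claimed $\exp(O(|V(H)|))$ bound.
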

\begin{proof}
    This is a result of Ref.~\cite{bjorklund2008computing}; see Ref.~\cite[Lemma 5]{helmuth2020algorithmic}.
\end{proof}

\begin{lemma}[{restate=[name=restatement]ContourWeightsAlgorithm}]
    \label{lemma:ContourWeightsAlgorithm}
    The weight $w_\gamma$ of a contour \mbox{$\gamma=(\bar{\gamma},\text{lab}_\gamma)$} can be computed in time $\exp(O(\abs{\bar{\gamma}}))$.
\end{lemma}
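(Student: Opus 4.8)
The plan is to exploit the inclusion-exclusion form of the contour weight $w_\gamma$ supplied by the proof of Lemma~\ref{lemma:QuantumContourExpansion}, which writes $w_\gamma$ as a finite alternating sum, over sub-supports $X\subseteq\bar\gamma$, of restricted partition functions determined by $\text{lab}_\gamma$, rather than as a Dyson series resummed order by order. The essential point is that each such restricted partition function is a trace over the Hilbert space $\mathcal{H}_S=\bigotimes_{v\in S}\mathcal{H}_v$ of a product of at most $\abs{\bar\gamma}$ operators of the form $e^{-\hat\beta H_\bullet}$ (possibly interleaved with projections onto ground-state or excited configurations), one for each time slice met by $\gamma$, where $S$ is the spatial projection of $X$ and each $H_\bullet$ is built from the local terms $\Phi_s$ and $\Psi$ on a spatial region contained in $S$. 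Since $\dim\mathcal{H}_S=d^{\abs{S}}\leq d^{\abs{\bar\gamma}}=\exp(O(\abs{\bar\gamma}))$, each such factor is an $\exp(O(\abs{\bar\gamma}))\times\exp(O(\abs{\bar\gamma}))$ matrix that can be assembled from the local terms and exponentiated to any prescribed accuracy by a truncated Taylor series or by diagonalisation; the matrix product and trace are then computed by explicit linear algebra in time $\operatorname{poly}(\dim\mathcal{H}_S)=\exp(O(\abs{\bar\gamma}))$. Forming the matrix exponential resums the entire Dyson series for that spatial region in one stroke, so no infinite perturbative sum is ever truncated by hand, and the only accuracy required is the polynomial precision demanded by the final approximation scheme, which contributes a negligible overhead.

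First I would extract from the proof of Lemma~\ref{lemma:QuantumContourExpansion} the explicit inclusion-exclusion expression for $w_\gamma$ and read off the sub-supports $X$ and the operator factors appearing in each summand. Because every space-time site of $\bar\gamma$ contributes a unit to $\abs{\bar\gamma}$, the spatial support $S$ and the number of time slices that $\gamma$ meets are both at most $\abs{\bar\gamma}$; in particular the large parameter $m$ never enters, since $\gamma$ occupies at most $\abs{\bar\gamma}$ of the $m$ slices, and each summand is a trace of a product of at most $\abs{\bar\gamma}$ matrix factors. The number of sub-supports appearing in the alternating sum is at most $2^{\abs{\bar\gamma}}=\exp(O(\abs{\bar\gamma}))$. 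Next I would verify that each summand, together with any scalar factors coming from $\text{lab}_\gamma$, is computable in time $\exp(O(\abs{\bar\gamma}))$ by the linear-algebra procedure above. Summing the $\exp(O(\abs{\bar\gamma}))$ terms then yields $w_\gamma$ in total time $\exp(O(\abs{\bar\gamma}))$, as claimed.

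The main obstacle, and the crux of the argument, is the first step: confirming that the modified representation underlying Lemma~\ref{lemma:QuantumContourExpansion} genuinely expresses $w_\gamma$ through finitely many matrix exponentials on Hilbert spaces of dimension $\exp(O(\abs{\bar\gamma}))$, with no residual dependence on the perturbation order. This is precisely the inclusion-exclusion reorganisation advertised in the introduction as our device for taming the Dyson series, and the delicate point is that the connected, contour-local part of the weight must be isolated by the alternating sub-support sum without reintroducing a dependence on the full volume $\abs{V(G)}$ or on the number of perturbation insertions. Once this structural fact is in hand, the complexity bound reduces to the routine cost of matrix exponentiation to polynomial precision, which is absorbed into the stated $\exp(O(\abs{\bar\gamma}))$ estimate.
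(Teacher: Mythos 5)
Your proposal is correct and takes essentially the same approach as the paper's proof: both expand $w_\gamma$ via the inclusion-exclusion representation from the proof of Lemma~\ref{lemma:QuantumContourExpansion}, observe that the sums over excitation subsets, admissible spin configurations, and edge subsets each contain at most $\exp(O(\abs{\bar{\gamma}}))$ terms, and evaluate each resulting matrix exponential by explicit linear algebra on a Hilbert space of dimension $\exp(O(\abs{\bar{\gamma}}))$, with the parameter $m$ entering only trivially. The only differences are presentational: the paper writes the decomposition explicitly through the transfer matrices $T_{\mathcal{N}[U_t]}(U_t)$, while you are slightly more careful about the numerical precision of matrix exponentiation, a point the paper leaves implicit.
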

We prove \mbox{Lemma~\ref{lemma:ContourWeightsAlgorithm}} in \mbox{Appendix~\ref{section:ContourWeightsAlgorithm}}.

\begin{lemma}
    \label{lemma:ApproximateTruncatedClusterExpansion}
    Let $\tilde{T}_n(Z_G^g(\beta,\lambda)$ denote the truncated cluster expansion $T_n(Z_G^g(\beta,\lambda))$ with the weights $w_\gamma^g$ replaced by multiplicative $\frac{\epsilon\abs{\operatorname{int}\bar{\gamma}}}{\abs{V(G)}}$-approximations. Then, there exists constants \mbox{$\beta^\star=\beta^\star(\nu,d,\Xi,\alpha_0)$} and \mbox{$\lambda^\star=\lambda^\star(\nu,d,\Xi,\alpha_0)$}, such that, for all $g\in\Xi$, all $\beta\geq\beta^\star$, and all $\abs{\lambda}\leq\lambda^\star$,
    \begin{equation}
        \abs{T_n(Z_G^g(\beta,\lambda))-\tilde{T}_n(Z_G^g(\beta,\lambda)} \leq \frac{\epsilon}{4}. \notag
    \end{equation}
\end{lemma}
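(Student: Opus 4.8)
The plan is to propagate the per-weight multiplicative errors through the cluster sum and then resum using the convergence estimates already available. Write each approximate weight as $\tilde{w}_\gamma^g = w_\gamma^g(1+\eta_\gamma)$ with $\abs{\eta_\gamma}\le\delta_\gamma\coloneqq\frac{\epsilon\abs{\operatorname{int}\bar{\gamma}}}{\abs{V(G)}}$, so that
\[
    T_n(Z_G^g(\beta,\lambda))-\tilde{T}_n(Z_G^g(\beta,\lambda)) = \sum_{\substack{\Gamma\in\mathcal{G}_C \\ \abs{\bar{\Gamma}}<n}}\varphi(H_\Gamma)\Big(\prod_{\gamma\in\Gamma}w_\gamma^g-\prod_{\gamma\in\Gamma}\tilde{w}_\gamma^g\Big).
\]
The first step is the elementary product-perturbation inequality applied cluster by cluster, namely
\[
    \Big|\prod_{\gamma\in\Gamma}w_\gamma^g-\prod_{\gamma\in\Gamma}\tilde{w}_\gamma^g\Big| \le \Big(\prod_{\gamma\in\Gamma}(1+\delta_\gamma)-1\Big)\prod_{\gamma\in\Gamma}\abs{w_\gamma^g} \le e^{\sum_{\gamma\in\Gamma}\delta_\gamma}\Big(\sum_{\gamma\in\Gamma}\delta_\gamma\Big)\prod_{\gamma\in\Gamma}\abs{w_\gamma^g},
\]
using $\prod(1+\delta_\gamma)\le e^{\sum\delta_\gamma}$ and $e^x-1\le xe^x$.

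Second, I would argue that the exponential prefactor is $O(1)$. Only clusters with $\abs{\bar{\Gamma}}<n$ contribute, and since $\operatorname{int}\bar{\gamma}$ is enclosed by $\bar{\gamma}$, a standard isoperimetric estimate on $\mathbb{Z}^\nu_m$ gives $\abs{\operatorname{int}\bar{\gamma}}\le\operatorname{poly}(\abs{\bar{\gamma}})\le\operatorname{poly}(n)$. Combined with the fact that the algorithm only ever uses $n=O(\log(\abs{V(G)}/\epsilon))$, this makes $\sum_{\gamma\in\Gamma}\delta_\gamma\le 1$ for all sufficiently large $\abs{V(G)}$ (the finitely many small instances being handled by exact computation), so $e^{\sum_{\gamma\in\Gamma}\delta_\gamma}\le e$.

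Third, and this is the crux, I would mark one contour $\gamma_0\in\Gamma$ as the carrier of the factor $\delta_{\gamma_0}=\frac{\epsilon\abs{\operatorname{int}\bar{\gamma_0}}}{\abs{V(G)}}$, interchange the order of summation, and pin the cluster sum at $\gamma_0$:
\[
    \abs{T_n(Z_G^g(\beta,\lambda))-\tilde{T}_n(Z_G^g(\beta,\lambda))} \le \frac{e\,\epsilon}{\abs{V(G)}}\sum_{\gamma_0}\abs{\operatorname{int}\bar{\gamma_0}}\sum_{\substack{\Gamma\in\mathcal{G}_C,\ \Gamma\ni\gamma_0 \\ \abs{\bar{\Gamma}}<n}}\abs{\varphi(H_\Gamma)}\prod_{\gamma\in\Gamma}\abs{w_\gamma^g}.
\]
The inner sum is exactly the pinned cluster sum controlled by the Koteck\'y--Preiss criterion underlying Lemma~\ref{lemma:ConvergenceClusterExpansion}; since the stability and Peierls bounds (Lemmas~\ref{lemma:QuantumContourWeightExponentialDecay} and~\ref{lemma:QuantumStableGroundStatesPolymerWeight}) yield $\abs{w_\gamma^g}\le e^{-\mu^\star\abs{\bar{\gamma}}}$ with $\mu^\star$ as large as we wish once $\beta^\star$ is large and $\lambda^\star$ small, the pinned sum is bounded by $\abs{w_{\gamma_0}^g}e^{O(\abs{\bar{\gamma_0}})}\le e^{-(\mu^\star-O(1))\abs{\bar{\gamma_0}}}$.

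Finally I would resum over $\gamma_0$. Using $\abs{\operatorname{int}\bar{\gamma_0}}\le\operatorname{poly}(\abs{\bar{\gamma_0}})$, that the number of contours of support size $k$ through a fixed vertex is $e^{O(k)}$, and that there are $O(\abs{V(G)})$ admissible locations (as $m=O(1)$), the series $\sum_{\gamma_0}\abs{\operatorname{int}\bar{\gamma_0}}\,e^{-(\mu^\star-O(1))\abs{\bar{\gamma_0}}}$ converges and is at most $C\abs{V(G)}$, with $C$ as small as desired by enlarging $\mu^\star$. The two factors of $\abs{V(G)}$ cancel, leaving $\abs{T_n-\tilde{T}_n}\le eC\epsilon\le\epsilon/4$ after fixing $\beta^\star$ and $\lambda^\star$. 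The main obstacle is this last cancellation: one must verify that the interior sizes $\abs{\operatorname{int}\bar{\gamma_0}}$, although potentially large, are tamed by the exponential contour-weight decay so that the marked-and-pinned sum grows only linearly in $\abs{V(G)}$ --- which is precisely what the factor $\frac{\abs{\operatorname{int}\bar{\gamma}}}{\abs{V(G)}}$ in the allowed relative error is calibrated to permit.
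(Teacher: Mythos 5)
Your skeleton --- telescope the product of weights, pin the cluster sum at a marked contour $\gamma_0$, and resum the factor $\delta_{\gamma_0}$ against the exponential decay of the weights --- is exactly the argument behind the paper's proof, which consists of citing Lemmas 2.3 and 3.9 of Borgs et al., so the approach is the right one. However, there is a genuine gap at the final resummation, located precisely at your parenthetical ``as $m=O(1)$''. In this paper $m$ is not bounded: it is chosen so that $\hat{\beta}=\beta/m\in[\beta^\star,2\beta^\star)$, i.e.\ $m=\Theta(\beta)$, while the lemma asserts a bound for all $\beta\geq\beta^\star$ with constants depending only on $(\nu,d,\Xi,\alpha_0)$. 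Contours are subsets of the space--time lattice $G\times[m]$, which has $m\abs{V(G)}$ sites, and contours with nonempty interior need not span the time circle (a classical excitation can be capped above and below by quantum transitions), so bounded-size contours enclosing volume occur at $\Theta(m\abs{V(G)})$ distinct positions, each with weight that your bounds cannot rule out. Your marked sum is therefore of order $m\abs{V(G)}\cdot C(\mu^\star)$ rather than $\abs{V(G)}\cdot C(\mu^\star)$, and after multiplying by $\epsilon/\abs{V(G)}$ you are left with $\epsilon\,m\,C(\mu^\star)$. Since $\mu^\star$ is fixed by $\beta^\star$ and $\lambda^\star$ before $\beta$ is chosen, no enlargement of $\mu^\star$ beats the factor $m$: your argument proves the inequality only for $\beta$ in a bounded window above $\beta^\star$, not for all $\beta\geq\beta^\star$. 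The repair is to calibrate the relative error to the space--time volume, $\delta_\gamma=\epsilon\abs{\operatorname{int}\bar{\gamma}}/(m\abs{V(G)})$, which is the normalisation of the cited lemmas (in the classical setting of Borgs et al.\ the spatial volume and the host-graph volume of the polymer model coincide; here they differ by exactly the factor $m$). This is a point on which the paper's own statement is loose, but a complete proof must confront it.

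A secondary deviation: the lemma is an inequality for every $n$ and every $G$, whereas your bound $e^{\sum_{\gamma\in\Gamma}\delta_\gamma}\leq e$ is justified only for $n=O(\log(\abs{V(G)}/\epsilon))$ and $\abs{V(G)}$ sufficiently large, with small instances ``handled by exact computation'' --- a statement about the algorithm, not about the inequality to be proved. The standard way to remove this crutch is to note that the approximate weights inherit the decay: since $\abs{\operatorname{int}\bar{\gamma}}$ is polynomial in $\abs{\bar{\gamma}}$, one has $1+\delta_\gamma\leq e^{c\abs{\bar{\gamma}}}$ for a small constant $c$, hence $\abs{\tilde{w}_\gamma^g}\leq e^{-(\mu^\star-c)\abs{\bar{\gamma}}}$; telescoping the difference of products so that each term contains one factor $\tilde{w}_{\gamma_0}-w_{\gamma_0}$ and all remaining factors bounded by $e^{-(\mu^\star-c)\abs{\bar{\gamma}}}$ then lets the pinned-sum estimate run uniformly in $n$ and $G$, with no exponential prefactor to control.
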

\begin{proof}
    This follows from Ref.~\cite[Lemma 2.3]{borgs2020efficient} and Ref.~\cite[Lemma 3.9]{borgs2020efficient}.
\end{proof}

\begin{lemma}[{restate=[name=restatement]TruncatedQuantumClusterExpansionApproximationAlgorithm}]
    \label{lemma:TruncatedQuantumClusterExpansionApproximationAlgorithm}
    There exists constants \mbox{$\beta^\star=\beta^\star(\nu,d,\Xi,\alpha_0)$} and \mbox{$\lambda^\star=\lambda^\star(\nu,d,\Xi,\alpha_0)$}, such that, for all $g\in\Xi$, all $\beta\geq\beta^\star$, and all $\abs{\lambda}\leq\lambda^\star$, the truncated cluster expansion $T_n(Z_G^g(\beta,\lambda))$ can be approximated up to an additive $\epsilon$-error in time \mbox{$\exp(O(n))\cdot\abs{V(G)}^{O(1)}\cdot(1/\epsilon)^{O(1)}$}.
\end{lemma}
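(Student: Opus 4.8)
The plan is to assemble the algorithm from the building blocks established in Lemmas~\ref{lemma:ListContoursByLevel} through~\ref{lemma:ApproximateTruncatedClusterExpansion}, and to bound its running time by counting the work done at each stage. Recall from the definition that
\begin{equation}
    T_n(Z_G^g(\beta,\lambda)) = -\hat{\beta}e_0\abs{V(G)}+\sum_{\substack{\Gamma\in\mathcal{G}_C \\ \abs{\bar{\Gamma}} < n}}\varphi(H_\Gamma)\prod_{\gamma\in\Gamma}w_\gamma^g, \notag
\end{equation}
so the task is to enumerate all clusters $\Gamma$ of size less than $n$, and for each one to compute the Ursell factor $\varphi(H_\Gamma)$ together with a sufficiently accurate approximation to the polymer-weight product $\prod_{\gamma\in\Gamma}w_\gamma^g$. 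The first term $-\hat{\beta}e_0\abs{V(G)}$ is computed exactly in time $\abs{V(G)}^{O(1)}$ and contributes nothing to the error.

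First I would enumerate the relevant clusters. By Lemma~\ref{lemma:ListClustersAlgorithm} the clusters of size at most $n$ can be listed in time $\exp(O(n))\cdot\abs{V(G)}^{O(1)}$; this simultaneously bounds the number of such clusters by $\exp(O(n))\cdot\abs{V(G)}^{O(1)}$. For each listed cluster $\Gamma$ I would then compute $\varphi(H_\Gamma)$ using Lemma~\ref{lemma:UrsellFunctionAlgorithm} in time $\exp(O(\abs{V(H_\Gamma)}))=\exp(O(n))$, since a cluster of total support size less than $n$ contains at most $n$ polymers. The remaining ingredient is the polymer weight $w_\gamma^g$ for each contour $\gamma$ appearing in $\Gamma$. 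This is where the approximation enters: rather than attempting an exact evaluation, I would compute a multiplicative $\tfrac{\epsilon\abs{\operatorname{int}\bar{\gamma}}}{\abs{V(G)}}$-approximation of each $w_\gamma^g$, which by Lemma~\ref{lemma:ApproximateTruncatedClusterExpansion} guarantees that the resulting approximate truncated expansion $\tilde{T}_n$ satisfies $\abs{T_n-\tilde{T}_n}\leq\epsilon/4$. Taking the error parameter in the statement to be, say, $4$ times the one fed to Lemma~\ref{lemma:ApproximateTruncatedClusterExpansion} then yields the desired additive $\epsilon$-error.

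The step I expect to require the most care is producing these approximate polymer weights within the stated time budget, since $w_\gamma^g$ is defined recursively through ratios of partition functions $Z_{\operatorname{int}_{g'}\bar{\gamma}}^{g'}/Z_{\operatorname{int}_{g'}\bar{\gamma}}^{g}$ on the interiors. The route is to combine Lemma~\ref{lemma:ContourWeightsAlgorithm}, which gives the bare contour weight $w_\gamma$ in time $\exp(O(\abs{\bar{\gamma}}))$, with a recursive scheme for the interior ratios. Concretely, the interior partition functions are themselves truncated cluster expansions on strictly smaller regions, so one approximates them by recursively invoking the same truncated-expansion algorithm; because each nested interior has support strictly contained in $\operatorname{int}\bar{\gamma}$ and the recursion depth and branching are controlled by $n$, the total cost of evaluating one $w_\gamma^g$ to the required precision remains $\exp(O(n))\cdot\abs{V(G)}^{O(1)}$. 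This is precisely the point at which the inclusion–exclusion approach of the paper is needed: the Dyson-series-based weights of Ref.~\cite{borgs1996low} must be estimated, not merely bounded, and Lemma~\ref{lemma:ContourWeightsAlgorithm} certifies that this estimation is efficient per contour.

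Finally I would total the running time. The number of clusters is $\exp(O(n))\cdot\abs{V(G)}^{O(1)}$, and for each we spend $\exp(O(n))$ on the Ursell function and $\exp(O(n))\cdot\abs{V(G)}^{O(1)}\cdot(1/\epsilon)^{O(1)}$ on the weight product (the $(1/\epsilon)^{O(1)}$ factor arising from the precision demanded in the approximation of each $w_\gamma^g$). Multiplying these gives an overall bound of $\exp(O(n))\cdot\abs{V(G)}^{O(1)}\cdot(1/\epsilon)^{O(1)}$, as claimed. The constants $\beta^\star$ and $\lambda^\star$ are inherited directly from Lemma~\ref{lemma:ApproximateTruncatedClusterExpansion} and Lemma~\ref{lemma:ConvergenceClusterExpansion}, so no new restrictions on the parameter regime are introduced at this stage.
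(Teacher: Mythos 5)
Your overall architecture is the paper's: enumerate clusters via Lemma~\ref{lemma:ListClustersAlgorithm}, compute Ursell functions via Lemma~\ref{lemma:UrsellFunctionAlgorithm}, replace each $w_\gamma^g$ by a multiplicative $\frac{\epsilon\abs{\operatorname{int}\bar{\gamma}}}{\abs{V(G)}}$-approximation, and control the error through Lemmas~\ref{lemma:ConvergenceClusterExpansion} and~\ref{lemma:ApproximateTruncatedClusterExpansion}. The gap is in the step you yourself flag as the delicate one: the running-time analysis of the recursive computation of $w_\gamma^g$. You propose a top-down recursion and assert that its total cost is $\exp(O(n))\cdot\abs{V(G)}^{O(1)}$ ``because the recursion depth and branching are controlled by $n$.'' That inference is invalid: the cost of a naive recursion is of order $(\text{branching})^{\text{depth}}$, not $(\text{branching})\times(\text{depth})$. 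Here the branching is the number of contour-weight evaluations spawned by the interior expansions, namely the number of clusters of size at most $n'=O(\log(\abs{V(G)}/\epsilon))$ in the interior, which is $\exp(O(n'))\cdot\abs{V(G)}^{O(1)}=(\abs{V(G)}/\epsilon)^{O(1)}$; and the depth is the contour level, which is not $O(1)$ --- it can grow like the interior volume, i.e.\ polynomially in $n$. So the recursion as you describe it only yields a quasi-polynomial bound, and the claimed time bound does not follow from your argument.

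The repair is exactly the ingredient you list among your building blocks but never actually use: Lemma~\ref{lemma:ListContoursByLevel}. The paper first lists \emph{all} contours of size at most $n$ and sorts them by level, then computes the approximations to $w_\gamma^g$ bottom-up in order of increasing level, storing each one; when a contour at level $t+1$ is processed, the interior truncated expansions are evaluated by looking up the already-stored approximations for contours of level at most $t$, so each contour weight is computed exactly once. With this memoization the induction on level gives a per-contour cost of $\exp(O(\abs{\bar{\gamma}}))\cdot\abs{V(G)}^{O(1)}\cdot(1/\epsilon)^{O(1)}$, and since the list contains only $\exp(O(n))\cdot\abs{V(G)}^{O(1)}$ contours, the total stays within the stated bound. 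Your correctness accounting (combining Lemma~\ref{lemma:ContourWeightsAlgorithm} for $w_\gamma$ with Lemmas~\ref{lemma:ConvergenceClusterExpansion} and~\ref{lemma:ApproximateTruncatedClusterExpansion} for the interior ratios) is fine; it is only the scheduling of the recursion --- dynamic programming over levels rather than a fresh top-down recursion --- that is missing, and without it the theorem's polynomial running time is not established.
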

We prove \mbox{Lemma~\ref{lemma:TruncatedQuantumClusterExpansionApproximationAlgorithm}} in \mbox{Appendix~\ref{section:TruncatedQuantumClusterExpansionApproximationAlgorithm}}. We now establish our main result.
\begin{theorem}
    \label{theorem:ApproximationAlgorithmPartitionFunction}
    Let $G$ be a finite induced subgraph of $\mathbb{Z}^\nu$.  There exists constants \mbox{$\beta^\star=\beta^\star(\nu,d,\Xi,\alpha_0)$} and \mbox{$\lambda^\star=\lambda^\star(\nu,d,\Xi,\alpha_0)$}, such that, for all $g\in\Xi$, all $\beta\geq\beta^\star$, and all $\abs{\lambda}\leq\lambda^\star$, there is a fully polynomial-time approximation scheme for the partition function $Z_G^g(\beta,\lambda)$ of a stable quantum perturbation of a classical spin system.
\end{theorem}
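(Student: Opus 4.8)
The plan is to assemble the theorem directly from the two substantive results already in hand: Lemma~\ref{lemma:ConvergenceClusterExpansion}, which controls the truncation error of the cluster expansion and guarantees $Z_G^g(\beta,\lambda)\neq 0$, and Lemma~\ref{lemma:TruncatedQuantumClusterExpansionApproximationAlgorithm}, which gives a deterministic, efficient algorithm for approximating the truncated series $T_n(Z_G^g(\beta,\lambda))$. The overall strategy is the one of Refs.~\cite{helmuth2020algorithmic, borgs2020efficient}: rather than approximate $Z_G^g(\beta,\lambda)$ directly, I would approximate $\log(Z_G^g(\beta,\lambda))$ by a suitably truncated cluster expansion and then exponentiate, exploiting that additive control on the logarithm translates into multiplicative control on the partition function.

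Concretely, fix $\beta\geq\beta^\star$ and $\abs{\lambda}\leq\lambda^\star$, taking the larger $\beta^\star$ and the smaller $\lambda^\star$ of the constants furnished by Lemmas~\ref{lemma:ConvergenceClusterExpansion} and~\ref{lemma:TruncatedQuantumClusterExpansionApproximationAlgorithm}; we may assume $\epsilon<1$. First I would set the truncation order to $n=O(\log(\abs{V(G)}/\epsilon))$, so that the estimate $\abs{T_n(Z_G^g(\beta,\lambda))-\log(Z_G^g(\beta,\lambda))}\leq\abs{V}e^{-\Omega(n)}$ of Lemma~\ref{lemma:ConvergenceClusterExpansion} is at most $\epsilon/4$. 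Next, applying Lemma~\ref{lemma:TruncatedQuantumClusterExpansionApproximationAlgorithm} with target additive error $\epsilon/4$ produces a deterministic approximation $\tilde{T}_n$ with $\abs{\tilde{T}_n-T_n(Z_G^g(\beta,\lambda))}\leq\epsilon/4$ in time $\exp(O(n))\cdot\abs{V(G)}^{O(1)}\cdot(1/\epsilon)^{O(1)}$. Outputting $\hat{Z}\coloneqq\exp(\tilde{T}_n)$, the triangle inequality gives $\abs{\log\hat{Z}-\log(Z_G^g(\beta,\lambda))}\leq\epsilon/2$, whence $\abs{\hat{Z}-Z_G^g(\beta,\lambda)}\leq(e^{\epsilon/2}-1)\abs{Z_G^g(\beta,\lambda)}\leq\epsilon\abs{Z_G^g(\beta,\lambda)}$, using $e^{x}-1\leq 2x$ for $x\in[0,1]$. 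This is exactly the multiplicative approximation required by the definition of an FPTAS.

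For the running time, the only point to verify is that the logarithmic choice of $n$ keeps the factor $\exp(O(n))$ polynomial: since $n=O(\log(\abs{V(G)}/\epsilon))$, we have $\exp(O(n))=(\abs{V(G)}/\epsilon)^{O(1)}$, so the total running time from Lemma~\ref{lemma:TruncatedQuantumClusterExpansionApproximationAlgorithm} is $\abs{V(G)}^{O(1)}\cdot(1/\epsilon)^{O(1)}$, i.e.\ polynomial in $\abs{V(G)}$ and $1/\epsilon$. Determinism is inherited from the algorithm of Lemma~\ref{lemma:TruncatedQuantumClusterExpansionApproximationAlgorithm}, which is itself built from the deterministic enumeration and weight-computation routines of Lemmas~\ref{lemma:ListContoursByLevel}--\ref{lemma:ContourWeightsAlgorithm}, and the closing exponentiation is a single arithmetic operation.

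I do not expect a genuine obstacle at this stage, since the analytic and algorithmic heavy lifting is encapsulated in the preceding lemmas and the theorem is essentially their corollary. The only mild subtlety is the bookkeeping of error propagation, namely allocating a budget of $\epsilon/4$ to truncation and $\epsilon/4$ to the approximation of $T_n$, and then converting the resulting additive error $\epsilon/2$ on the logarithm into a multiplicative error $\epsilon$ on $Z_G^g(\beta,\lambda)$ via the elementary exponential bound; together with confirming that the logarithmic truncation order does not inflate the runtime beyond polynomial. Both steps are routine.
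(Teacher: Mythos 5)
Your proposal is correct and follows exactly the paper's route: the paper proves the theorem in one line by combining Lemma~\ref{lemma:ConvergenceClusterExpansion} with Lemma~\ref{lemma:TruncatedQuantumClusterExpansionApproximationAlgorithm}, and your argument is precisely this combination with the error bookkeeping (truncation budget $\epsilon/4$, approximation budget $\epsilon/4$, exponentiation to convert additive error on the logarithm into multiplicative error) and the runtime check $\exp(O(n))=(\abs{V(G)}/\epsilon)^{O(1)}$ made explicit. No gaps; the details you supply are the ones the paper leaves implicit.
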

\begin{proof}
    Combining \mbox{Lemma~\ref{lemma:ConvergenceClusterExpansion}} and \mbox{Lemma~\ref{lemma:TruncatedQuantumClusterExpansionApproximationAlgorithm}} gives a fully polynomial-time approximation scheme for the partition function $Z_G^g(\beta,\lambda)$ for all $g\in\Xi$, all $\beta\geq\beta^\star$, and all $\abs{\lambda}\leq\lambda^\star$.
\end{proof}

\section{Conclusion \& Outlook}
\label{section:ConclusionAndOutlook}

We have established a polynomial-time approximation algorithm for partition functions of a class of quantum spin systems at low temperature. This class can be viewed as stable quantum perturbations of classical spin systems. Our algorithm is based on combining the contour representation of quantum spin systems of this type due to Borgs, Koteck\'y, and Ueltschi~\cite{borgs1996low} with the algorithmic framework developed by Helmuth, Perkins, and Regts~\cite{helmuth2020algorithmic}, and Borgs et al.~\cite{borgs2020efficient}.

It would be interesting to extend these results to a more general class of quantum models, e.g., to bosonic and fermionic systems that are perturbations of classical spin systems with finitely many ground states. It would be extremely interesting to go further, and to obtain algorithms for low-temperature quantum systems with an infinite degeneracy of ground states, e.g., when the set of ground states possesses a continuous symmetry. Pirogov--Sinai theory cannot be applied in such a circumstance. However, efficient algorithms may still exist. For example, the arboreal gas (a model of interacting symplectic fermions) possesses highly degenerate ground states~\cite{bauerschmidt2021percolation}, but nonetheless an efficient algorithm exists at all temperatures~\cite{anari2021log}.

\section*{Acknowledgements}

RLM was supported by the QuantERA ERA-NET Cofund in Quantum Technologies implemented within the European Union's Horizon 2020 Programme (QuantAlgo project), EPSRC grants EP/L021005/1, EP/R043957/1, and EP/T001062/1, and the ARC Centre of Excellence for Quantum Computation and Communication Technology (CQC2T), project number CE170100012. No new data were created during this study.

\onecolumngrid

\appendix

\section{Proof of Lemma~\ref*{lemma:QuantumContourExpansion}}
\label{section:QuantumContourExpansion}

\QuantumContourExpansion*

\begin{proof}
    We begin by introducing the transfer matrix \mbox{$T \coloneqq e^{-\hat{\beta}H_G^g}$} with \mbox{$\hat{\beta}=\frac{\beta}{m}$}, and rewriting the partition function $Z_G^g(\beta,\lambda)$ as \mbox{$Z_G^g(\beta,\lambda)=\Tr_{\mathcal{H}_G}\left[T^m\right]$}. By the principle of inclusion-exclusion (see for example~\cite[Theorem 12.1]{graham1995handbook}),
    \begin{equation}
        T = \sum_{S \subseteq E}\bar{w}_S, \notag
    \end{equation}
    where, recalling that \mbox{$H^{g}_{G} = H^{g}_{G,\Phi}+\lambda H^{g}_{G,\Psi}$},
    \begin{equation}
        \bar{w}_S \coloneqq (-1)^{\abs{S}}\sum_{S' \subseteq S}(-1)^{\abs{S'}} e^{-\hat{\beta}\left(H_{G,\Phi}^g+\lambda\sum_{e \in S'}\Psi(e)\right)}. \notag
    \end{equation}
    We now resum to obtain
    \begin{equation}
        T = \sum_{U \subseteq V}T(U), \notag
    \end{equation}
    where, introducing the notation \mbox{$\text{supp}(S) = \{v\in V \mid \text{there exists $e \in S$ with $v \in e$}\}$}, $T(U)$ is given by
    \begin{equation}
        T(U) \coloneqq \sum_{\substack{S \subseteq E \\ \text{supp}(S)=U}}\bar{w}_S. \notag
    \end{equation}
    We introduce the Hamiltonian $H_{G,\Phi}^g(X)$ that is obtained from $H_{G,\Phi}^g$ by restricting to a subset $X$ of $V$ so that \mbox{$\expval{H_{G,\Phi}^g(X)}{s}=\sum_{v \in X}\Phi_s^g(v)$}, where $\Phi_s^g$ denotes $\Phi_s$ with the boundary conditions $g$ inherited from $H_{G,\Phi}^g$. We now introduce the operator $T_X(U)$ that is obtained from $T(U)$ by replacing the Hamiltonian $H_{G,\Phi}^g$ by $H_{G,\Phi}^g(X)$. That is, 
    \begin{equation}
        T_X(U) \coloneqq \sum_{\substack{S \subseteq E \\ \text{supp}(S)=U}}(-1)^{\abs{S}}\sum_{S' \subseteq S}(-1)^{\abs{S'}} e^{-\hat{\beta}\left(H_{G,\Phi}^g(X)+\lambda\sum_{e \in S'}\Psi(e)\right)}. \notag
    \end{equation}
    For a subset $U$ of $V$, let $G_U$ be the graph with vertex set $U$ and edges between any two vertices $u$ and $v$ if and only if \mbox{$\mathcal{N}[u]\cap\mathcal{N}[v]\neq\varnothing$}. Further let $\{U_i\}_{i=1}^{k(U)}$ denote the collection of all sets of vertices that correspond to a connected component of $G_U$. Note that if $e_i$ and $e_j$ are edges that intersect $U_i$ and $U_j$ respectively, then $e_i$ and $e_j$ have no vertices in common if $i \neq j$. Thus, we may write
    \begin{equation}
        T(U)=e^{-\hat{\beta}H_{G,\Phi}^g(V\setminus\mathcal{N}[U])}\prod_{i=1}^{k(U)}T_{\mathcal{N}[U_i]}(U_i). \notag
    \end{equation}
    For a subset $U$ of $V$, let $\Omega_U$ denote the set of all configurations $s\in\Omega$ on $U$. We now expand $T(U)$ to obtain
    \begin{equation}
        T(U) = \sum_{s\in\Omega_{V \setminus U}}e^{-\hat{\beta}\matrixel{s}{H_{G,\Phi}^g(V\setminus\mathcal{N}[U])}{s}}\ketbra{s}{s}\bigotimes_{i=1}^{k(U)}\matrixel{s_{V \setminus U_i}}{T_{\mathcal{N}[U_i]}(U_i)}{s_{V \setminus U_i}}. \notag
    \end{equation}
    Let $X$ denote the set of all pairs \mbox{$\chi=(U,s)$} of vertices $U \subseteq V$ and spin configurations $s\in\Omega_{V \setminus U}$. Inserting our formula for $T(U)$ into our formula for $T$ and using this definition, we obtain
    \begin{equation}
        T = \sum_{\chi \in X}\bar{w}_\chi \notag
    \end{equation}
        where 
    \begin{equation}
        \bar{w}_\chi \coloneqq e^{-\hat{\beta}\matrixel{s_{V \setminus U}}{H_{G,\Phi}^g(V\setminus\mathcal{N}[U])}{s_{V \setminus U}}}\ketbra{s_{V \setminus U}}{s_{V \setminus U}}\bigotimes_{i=1}^{k(U)}\matrixel{s_{V \setminus U_i}}{T_{\mathcal{N}[U_i]}(U_i)}{s_{V \setminus U_i}}. \notag
    \end{equation}
    Thus, we may now write the partition function as
    \begin{equation}
        Z_G^g(\beta,\lambda) = \sum_{\chi_1,\ldots,\chi_m \in X}w_{\chi_1,\ldots,\chi_m}, \notag
    \end{equation}
    where the \emph{weight} $w_{\chi_1,\ldots,\chi_m}$ of the sequence $\chi_1,\dots,\chi_m$ is defined by
    \begin{equation}
        w_{\chi_1,\ldots,\chi_m} \coloneqq \Tr_{\mathcal{H}_G}\left[\prod_{i=1}^m\bar{w}_{\chi_i}\right]. \notag
    \end{equation}
    We shall now consider configurations on the vertex set $V\times[m]$ with \emph{time slices} $(V_t)_{t=1}^m$ where $V_t$ is the set \mbox{$V_t\coloneqq\{(v,t) \mid v \in V\}$}. Since we consider periodic boundary conditions, we identify $t=m+1$ with $t=1$. Further, we now associate the configuration $\chi_t$ with time slice $V_t$ and write \mbox{$\chi_t=(U_t,s_{V_t \setminus U_t})$} where $U_t$ is a subset of $V_t$. For a given collection of configurations $(\chi_t)_{t=1}^m$, we assign a variable $\sigma_{(v,t)}$ to each vertex $(v,t)$ in $V\times[m]$ such that
    \begin{equation}
        \sigma_{(v,t)} = 
        \begin{cases} 
            s_{(v,t)} & \text{if $(v,t) \in V_t{\setminus}U_t$} \\
            \varnothing & \text{if $(v,t) \in U_t$}
        \end{cases}, \notag
    \end{equation}
    where $\varnothing$ serves to indicate the presence of a quantum excitation. We say that a vertex $(v,t)$ is in the ground state $g$ if the variable $\sigma_{(u,t)}$ coincides with the ground state $g$ for all \mbox{$(u,t)\in\mathcal{N}[(v,t)]$}. Otherwise, we say the vertex is excited. Note that two consecutive classical configurations with \mbox{$\sigma_{(v,t)}=s_{(v,t)}$} and \mbox{$\sigma_{(v,t+1)}=s_{(v,t+1)}$} have weight zero unless \mbox{$\sigma_{(v,t)}=\sigma_{(v,t+1)}$}. We now extract an overall factor of $e^{-\hat{\beta}e_0\abs{V(G)}}$. Finally, for a given configuration $\sigma$, we consider the union $\mathcal{V}$ of all excited vertices. Let $\sigma_\mathcal{V}$ denote the configuration of $\sigma$ on $\mathcal{V}$. Further let $\text{lab}_\mathcal{V}$ be a labelling function that assigns a label to each edge in the boundary $\partial\mathcal{V}$ such that \mbox{$\text{lab}_\mathcal{V}(e)=g'$} if $e$ is an edge for a vertex in $\mathcal{V}$ and a vertex in the ground state $g'$. We now define the weight $w_{\sigma_\mathcal{V},\text{lab}_\mathcal{V}}$ by
    \begin{equation}
        w_{\sigma_\mathcal{V},\text{lab}_\mathcal{V}} \coloneqq e^{\hat{\beta}e_0\abs{V(G)}}w_{\chi_1,\ldots,\chi_m}. \notag
    \end{equation}
    Note that the weight $w_{\sigma_\mathcal{V},\text{lab}_\mathcal{V}}$ depends only on the configuration $\sigma_\mathcal{V}$ and the labelling function $\text{lab}_\mathcal{V}$. Let $\{\mathcal{V}_i\}_{i=1}^{k(\mathcal{V})}$ denote the connected components of $\mathcal{V}$, then we have
    \begin{equation}
        w_{\sigma_\mathcal{V},\text{lab}_\mathcal{V}} = \prod_{i=1}^{k(\mathcal{V})}w_{\sigma_{\mathcal{V}_i},\text{lab}_{\mathcal{V}_i}}. \notag
    \end{equation}
    A configuration $\sigma$ corresponds to a set of pairwise compatible contours $\Gamma$ by taking $\{\mathcal{V}_i\}_{i=1}^{k(\mathcal{V})}$ as the supports of the contours and $\{\text{lab}_{\mathcal{V}_i}\}_{i=1}^{k(\mathcal{V})}$ as the labelling functions. Note that the set $\Gamma$ is matching and of type $g$.  For a contour \mbox{$\gamma=(\bar{\gamma},\text{lab}_\gamma)$}, let $\Omega_\gamma$ denote the set of all configurations $\sigma_{\bar{\gamma}}$ on $\bar{\gamma}$ that are compatible with the ground states determined by the labelling function $\text{lab}_\gamma$. Then we resum to obtain
    \begin{equation}
        Z_G^g(\beta,\lambda) = e^{-\hat{\beta}e_0\abs{V(G)}}\sum_{\Gamma\in\mathcal{G}_\text{match}^g}\prod_{\gamma\in\Gamma}w_\gamma, \notag
    \end{equation}
    where
    \begin{equation}
        w_\gamma \coloneqq \sum_{\sigma_{\bar{\gamma}}\in\Omega_\gamma}w_{\sigma_{\bar{\gamma}},\text{lab}_\gamma}. \notag
    \end{equation}
    This is the contour representation of the partition function $Z_G^g(\beta,\lambda)$, completing the proof.
\end{proof}

\section{Proof of Lemma~\ref*{lemma:QuantumContourWeightExponentialDecay}}
\label{section:QuantumContourWeightExponentialDecay}

Our proof of \mbox{Lemma~\ref{lemma:QuantumContourWeightExponentialDecay}} is based on the analysis of Borgs, Koteck\'y, and Ueltschi~\cite{borgs1996low}. We first require the following lemma bounding the operator $T_X(U)$. 
\begin{lemma}
    \label{lemma:ExponentialDecayTransferMatrix}
    Let \mbox{$G=(V,E)$} be a subgraph of the lattice graph on $\mathbb{Z}^\nu$ and let $U \subseteq V$ be a subset of vertices of $G$. Further let $\hat{\beta}>0$, $\lambda\in\mathbb{C}$, and $\alpha\geq0$ be such that \mbox{$\hat{\beta}\abs{\lambda}\leq\frac{e^{-2(\alpha+1)}}{2\nu+1}$}. Then, for any configurations $s\in\Omega_V$ and $\bar{s}\in\Omega_V$, 
    \begin{equation}
        \abs{\matrixel{s}{T_{\mathcal{N}[U]}(U)}{\bar{s}}} \leq e^{-\hat{\beta}e_0\abs{\mathcal{N}[U]}}e^{-\alpha\abs{U}}. \notag
    \end{equation}
\end{lemma}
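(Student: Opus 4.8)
The plan is to expand the operator $T_{\mathcal{N}[U]}(U)$ into a \emph{connected} Dyson series in which every edge of the support must appear, and then to bound it in operator norm. Write $A \coloneqq H_{G,\Phi}^g(\mathcal{N}[U])$ for the (diagonal) free Hamiltonian restricted to $\mathcal{N}[U]$. The starting observation is that the inner alternating sum $\sum_{S'\subseteq S}(-1)^{\abs{S'}}e^{-\hat\beta(A+\lambda\sum_{e\in S'}\Psi(e))}$ appearing in the definition of $T_{\mathcal{N}[U]}(U)$ is a finite-difference operator that annihilates any contribution not involving all edges of $S$. Since the matrix element of interest is dominated by the operator norm, $\abs{\matrixel{s}{T_{\mathcal{N}[U]}(U)}{\bar s}}\le\norm{T_{\mathcal{N}[U]}(U)}$, it suffices to bound the latter.

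First I would apply the Duhamel (Dyson) expansion to each exponential, writing
$$e^{-\hat\beta(A+\lambda B_{S'})}=\sum_{n\ge0}(-\lambda)^n\sum_{(e_1,\ldots,e_n)\in(S')^n}\int_{0\le\tau_1\le\cdots\le\tau_n\le\hat\beta}e^{-(\hat\beta-\tau_n)A}\Psi(e_n)\cdots\Psi(e_1)e^{-\tau_1 A}\,d\tau,$$
where $B_{S'}=\sum_{e\in S'}\Psi(e)$. Applying $\sum_{S'\subseteq S}(-1)^{\abs{S'}}$ and interchanging sums, for a fixed sequence with distinct-edge set $E_0$ one has $\sum_{E_0\subseteq S'\subseteq S}(-1)^{\abs{S'}}=0$ unless $E_0=S$; thus only sequences whose distinct edges are exactly $S$ survive, contributing a sign $(-1)^{\abs{S}}$ that cancels the prefactor $(-1)^{\abs{S}}$ in $\bar w_S$. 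Resumming over all $S$ with $\operatorname{supp}(S)=U$ then yields a representation of $T_{\mathcal{N}[U]}(U)$ as a sum over edge-sequences $(e_1,\ldots,e_n)$ whose edges collectively cover exactly $U$.

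With this representation the estimate is straightforward. I would bound the operator norm of each summand using $\norm{\Psi(e)}\le1$ and $\norm{e^{-tA}}\le e^{-t e_0\abs{\mathcal{N}[U]}}$, the latter because $A$ is diagonal with smallest eigenvalue at least $e_0\abs{\mathcal{N}[U]}$ (every vertex contributes at least the ground-state energy $e_0$, by Peierls' condition). Since the free evolutions together span total time $\hat\beta$, each summand has norm at most $e^{-\hat\beta e_0\abs{\mathcal{N}[U]}}$, and the simplex integral contributes $\hat\beta^n/n!$. It remains to count the covering sequences: all edges lie in the induced subgraph on $U$, which has at most $\nu\abs{U}$ edges, so there are at most $(\nu\abs{U})^n$ sequences of length $n$, and a covering sequence requires $n\ge\abs{U}/2$. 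Hence
$$\norm{T_{\mathcal{N}[U]}(U)}\le e^{-\hat\beta e_0\abs{\mathcal{N}[U]}}\sum_{n\ge\abs{U}/2}\frac{(\hat\beta\abs{\lambda}\,\nu\abs{U})^n}{n!}.$$

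Finally I would sum this series using the hypothesis $\hat\beta\abs{\lambda}\le\frac{e^{-2(\alpha+1)}}{2\nu+1}$. Setting $y=\hat\beta\abs{\lambda}\,\nu\abs{U}\le\tfrac12\abs{U}e^{-2(\alpha+1)}$ (since $\tfrac{\nu}{2\nu+1}\le\tfrac12$) and using $n!\ge(n/e)^n$ together with $n\ge\abs{U}/2\ge y\,e^{2(\alpha+1)}$, each term is bounded by $(ey/n)^n\le e^{-(2\alpha+1)n}$; summing the geometric series and using that a nonempty support has $\abs{U}\ge2$ (each edge has two endpoints), while $U=\varnothing$ is trivial, gives the factor $\le e^{-\alpha\abs{U}}$, completing the bound. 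The main obstacle is the first step: correctly carrying out the inclusion--exclusion/telescoping to identify $T_{\mathcal{N}[U]}(U)$ with a connected Dyson series in which the support condition forces every edge of $S$ to appear, so that the Peierls bound and the $1/n!$ from time-ordering can be played off against the smallness of $\hat\beta\abs{\lambda}$. Once this structural identity is in hand, the norm estimate and series summation are routine.
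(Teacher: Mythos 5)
Your proposal is correct and follows essentially the same route as the paper's proof: bound the matrix element by the operator norm, apply the Duhamel/Dyson expansion together with the inclusion--exclusion over $S'\subseteq S$ so that only edge-sequences covering $U$ survive, extract the factor $e^{-\hat{\beta}e_0\abs{\mathcal{N}[U]}}$ from the diagonal part, and sum the resulting series using \mbox{$\hat{\beta}\abs{\lambda}\leq\frac{e^{-2(\alpha+1)}}{2\nu+1}$} and the fact that covering $U$ forces at least $\abs{U}/2$ edges. The only difference is combinatorial bookkeeping: the paper counts covering sequences via binomial coefficients and Stirling numbers of the second kind followed by a Vandermonde resummation, whereas you over-count all sequences by $(\nu\abs{U})^n$ and compensate with $n!\geq(n/e)^n$; both yield the same geometric series and the same bound.
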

\begin{proof}
    We have
    \begin{align}
        \abs{\matrixel{s}{T_{\mathcal{N}[U]}(U)}{\bar{s}}} &\leq \norm{T_{\mathcal{N}[U]}(U)} \notag \\
        &\leq \sum_{\substack{S \subseteq E \\ \text{supp}(S)=U}}\norm{(-1)^{\abs{S}}\sum_{S' \subseteq S}(-1)^{\abs{S'}} e^{-\hat{\beta}\left(H_{G,\Phi}^g(\mathcal{N}[U])+\lambda\sum_{e \in S'}\Psi(e)\right)}}. \notag
    \end{align}
    Let $P$ denote the set of all sequences of edges from $E$, the edge set of $G$. By resumming and applying the Duhamel expansion, it follows from the triangle inequality and the submultiplicativity of the norm (as in the proof of Ref.~\cite[Lemma 4.2]{borgs1996low}) that
    \begin{align}
        \abs{\matrixel{s}{T_{\mathcal{N}[U]}(U)}{\bar{s}}} &\leq e^{-\hat{\beta}e_0\abs{\mathcal{N}[U]}}\sum_{\substack{S \subseteq E \\ \text{supp}(S)=U}}\sum_{\substack{\rho \in P \\ \text{supp}(\rho)=S}}\frac{\left(\hat{\beta}\abs{\lambda}\right)^{\abs{\rho}}}{\abs{\rho}!}\prod_{e \in \rho}\norm{\Psi(e)} \notag \\
        &\leq e^{-\hat{\beta}e_0\abs{\mathcal{N}[U]}}\sum_{\substack{S \subseteq E \\ \text{supp}(S)=U}}\sum_{\substack{\rho \in P \\ \text{supp}(\rho)=S}}\frac{\left(\hat{\beta}\abs{\lambda}\right)^{\abs{\rho}}}{\abs{\rho}!}. \notag
    \end{align}
    Since the subgraph $G[U]$ has at most $\nu\abs{U}$ edges, there are at most $\binom{\nu\abs{U}}{n}$ subsets of $n$ edges whose support is $U$. Furthermore, for such a subset, there are precisely $\genfrac{\{}{\}}{0pt}{}{k}{n}n!$ sequences $\rho$ of length $\abs{\rho}=k$ that correspond to it. Here $\genfrac{\{}{\}}{0pt}{}{k}{n}$ denotes the Stirling number of the second kind. Finally, these subsets must contain at least $\left\lceil\frac{\abs{U}}{2}\right\rceil$ edges. Thus, we may write
    \begin{align}
        \abs{\matrixel{s}{T_{\mathcal{N}[U]}(U)}{\bar{s}}} &\leq e^{-\hat{\beta}e_0\abs{\mathcal{N}[U]}}\sum_{n=\left\lceil\frac{\abs{U}}{2}\right\rceil}^{\nu\abs{U}}\binom{\nu\abs{U}}{n}\sum_{k=n}^\infty\genfrac{\{}{\}}{0pt}{}{k}{n}\frac{n!}{k!}\left(\hat{\beta}\abs{\lambda}\right)^k \notag \\
        &\leq e^{-\hat{\beta}e_0\abs{\mathcal{N}[U]}}\sum_{n=\left\lceil\frac{\abs{U}}{2}\right\rceil}^{\nu\abs{U}}\binom{\nu\abs{U}}{n}\sum_{k=n}^\infty\binom{k}{n}\left(\hat{\beta}\abs{\lambda}\right)^k. \notag
    \end{align}
    By interchanging the summations over $n$ and $k$, we obtain
    \begin{align}
        \abs{\matrixel{s}{T_{\mathcal{N}[U]}(U)}{\bar{s}}} &\leq e^{-\hat{\beta}e_0\abs{\mathcal{N}[U]}}\sum_{k=\left\lceil\frac{\abs{U}}{2}\right\rceil}^\infty\left(\hat{\beta}\abs{\lambda}\right)^k\sum_{n=0}^{\nu\abs{U}}\binom{\nu\abs{U}}{n}\binom{k}{n} \notag \\
        &= e^{-\hat{\beta}e_0\abs{\mathcal{N}[U]}}\sum_{k=\left\lceil\frac{\abs{U}}{2}\right\rceil}^\infty\binom{\nu\abs{U}+k}{k}\left(\hat{\beta}\abs{\lambda}\right)^k \notag \\
        &\leq e^{-\hat{\beta}e_0\abs{\mathcal{N}[U]}}\sum_{k=\left\lceil\frac{\abs{U}}{2}\right\rceil}^\infty\left(e(2\nu+1)\hat{\beta}\abs{\lambda}\right)^k. \notag
    \end{align}
    By taking \mbox{$\hat{\beta}\abs{\lambda}\leq\frac{e^{-2(\alpha+1)}}{2\nu+1}$}, we have
    \begin{align}
        \abs{\matrixel{s}{T_{\mathcal{N}[U]}(U)}{\bar{s}}} &\leq e^{-\hat{\beta}e_0\abs{\mathcal{N}[U]}}\sum_{k=\left\lceil\frac{\abs{U}}{2}\right\rceil}^\infty e^{-(2\alpha+1)k} \notag \\
        &\leq e^{-\hat{\beta}e_0\abs{\mathcal{N}[U]}}e^{-\alpha\abs{U}}, \notag
    \end{align}
    completing the proof.
\end{proof}

We now prove \mbox{Lemma~\ref{lemma:QuantumContourWeightExponentialDecay}}.

\QuantumContourWeightExponentialDecay*

\begin{proof}
    We bound $\abs{w_\gamma}$ from above by considering a sum over all configurations \mbox{$s\in\Omega_{\bar{\gamma}}$} on the support $\bar{\gamma}$ of $\gamma$. This gives
    \begin{equation}
        \abs{w_\gamma} \leq e^{\hat{\beta}e_0\abs{\bar{\gamma}}}\sum_{\substack{U\subset\bar{\gamma} \\ \mathcal{N}[U]\subseteq\bar{\gamma}}}\sum_{s\in\Omega_{\bar{\gamma}}}e^{-\hat{\beta}\matrixel{s_{\bar{\gamma} \setminus U}}{H_{G,\Phi}^g(\bar{\gamma}\setminus\mathcal{N}[U])}{s_{\bar{\gamma} \setminus U}}}\abs{\matrixel{s}{T_{\mathcal{N}[U]}(U)}{s}}. \notag
    \end{equation}
    By using Peierls' condition and applying \mbox{Lemma~\ref{lemma:ExponentialDecayTransferMatrix}}, we obtain
    \begin{align}
        \abs{w_\gamma} &\leq \left(de^{\hat{\beta}e_0}\right)^{\abs{\bar{\gamma}}}\sum_{\substack{U\subset\bar{\gamma} \\ \mathcal{N}[U]\subseteq\bar{\gamma}}}e^{-\hat{\beta}(e_0+\alpha_0)\abs{\bar{\gamma}\setminus\mathcal{N}[U]}}e^{-\hat{\beta}e_0\abs{\mathcal{N}[U]}}e^{-\alpha\abs{U}} \notag \\
        &= \left(de^{-\hat{\beta}\alpha_0}\right)^{\abs{\bar{\gamma}}}\sum_{\substack{U\subset\bar{\gamma} \\ \mathcal{N}[U]\subseteq\bar{\gamma}}}e^{\hat{\beta}\alpha_0\abs{\mathcal{N}[U]}}e^{-\alpha\abs{U}}. \notag
    \end{align}
    Now, by using \mbox{$\abs{\mathcal{N}[U]}\leq2\nu\abs{U}$}, we have
    \begin{align}
        \abs{w_\gamma} &\leq \left(de^{-\hat{\beta}\alpha_0}\right)^{\abs{\bar{\gamma}}}\sum_{\substack{U\subset\bar{\gamma} \\ \mathcal{N}[U]\subseteq\bar{\gamma}}}e^{\left(\hat{\beta}\alpha_0-\frac{\alpha}{2\nu}\right)\abs{\mathcal{N}[U]}} \notag \\
        &\leq \left(de^{-\hat{\beta}\alpha_0}\left(e^{\hat{\beta}\alpha_0-\frac{\alpha}{2\nu}}+1\right)\right)^{\abs{\bar{\gamma}}} \notag \\
        &= \left(d\left(e^{-\frac{\alpha}{2\nu}}+e^{-\hat{\beta}\alpha_0}\right)\right)^{\abs{\bar{\gamma}}}, \notag
    \end{align}
    completing the proof.
\end{proof}

\section{Proof of Lemma~\ref*{lemma:ConvergenceClusterExpansion}}
\label{section:ConvergenceClusterExpansion}

\ConvergenceClusterExpansion*

\begin{proof}
    We proceed by showing that the polymer weights $w_\gamma^g$ satisfy a bound of the form \mbox{$\abs{w_\gamma^g} \leq e^{-\mu^\star\abs{\bar{\gamma}}}$} for a sufficiently large constant $\mu^\star$. Let \mbox{$\alpha=\alpha(\nu,d,\mu^\star,\alpha_0)$} and $\beta^\star(\nu,d,\mu^\star,\alpha_0)$ be such that
    \begin{equation}
        d\left(e^{-\frac{\alpha}{2\nu}}+e^{-\beta^\star\alpha_0}\right) \leq e^{-(\mu^\star+2(\nu+1))}, \notag
    \end{equation}
    where we recall that $\alpha_0$ is the constant in Peierls' condition. By choosing $m$ such that \mbox{$\hat{\beta}\in[\beta^\star,2\beta^\star)$}, it follows from \mbox{Lemma~\ref{lemma:QuantumContourWeightExponentialDecay}} that there is a $\lambda^\star=\lambda^\star(\beta^\star,\alpha,\nu)$ such that for $|\lambda|\leq\lambda^\star$,
    \begin{equation}
        \abs{w_\gamma} \leq e^{-(\mu^\star+2(\nu+1))\abs{\bar{\gamma}}}. \notag
    \end{equation}
    Now, by applying \mbox{Lemma~\ref{lemma:QuantumStableGroundStatesPolymerWeight}} with $\mu^\star\geq\mu_0(\nu,d,\Xi,\alpha_0)$ and using the assumption that the ground states are all stable, we have
    \begin{equation}
        \abs{w_\gamma^g} \leq e^{-\mu^\star\abs{\bar{\gamma}}}, \notag
    \end{equation}
    The proof then follows from Ref.~\cite[Lemma 2.1]{borgs2020efficient}, which states that for abstract polymer models where the polymers are connected induced subgraphs of a bounded-degree graph, compatibility is defined by vertex disjointness, and the polymer weights $w_\gamma^g$ satisfy a bound of the form \mbox{$\abs{w_\gamma^g} \leq e^{-\mu^\star\abs{\bar{\gamma}}}$} for a sufficiently large constant $\mu^\star$, then the cluster expansion for $\log(Z_G^g(\beta,\lambda))$ converges absolutely, \mbox{$Z_G^g(\beta,\lambda)\neq0$}, and for $n\in\mathbb{Z}^+$,
    \begin{equation}
        \abs{T_n(Z_G^g(\beta,\lambda))-\log(Z_G^g(\beta,\lambda))} \leq \abs{V}e^{-\Omega(n)}. \notag
    \end{equation}
    In the application of this lemma, we have considered polymers as induced subgraphs of the subdivided lattice.
\end{proof}

\section{Proof of Lemma~\ref*{lemma:ContourWeightsAlgorithm}}
\label{section:ContourWeightsAlgorithm}

\ContourWeightsAlgorithm*

\begin{proof}
    We proceed by describing an explicit representation of the weight $w_\gamma$ of a contour. We say that a spin configuration \mbox{$s\in\Omega_{\bar{\gamma}}$} on $\bar{\gamma}$ is \emph{admissible} if all vertices are excited. We denote the support $\bar{\gamma}$ of a contour $\gamma$ on time slice $t$ by $\bar{\gamma}_t$. For a subset $U$ of $\bar{\gamma}$, we denote the support of $U$ on time slice $t$ by $U_t$. Then, the weight $w_\gamma$ of a contour $\gamma$ may be written as
    \begin{equation}
        w_\gamma = e^{\hat{\beta}e_0\abs{\bar{\gamma}}}\sum_{\substack{U\subset\bar{\gamma} \\ \mathcal{N}[U]\subseteq\bar{\gamma}}}\sum_{\substack{s\in\Omega_{\bar{\gamma}} \\ \text{admissible}}}\prod_{t=1}^me^{-\hat{\beta}\matrixel{s_{\bar{\gamma}_t \setminus U_t}}{H_{G,\Phi}^g(\bar{\gamma}_t \setminus\mathcal{N}[U_t])}{s_{\bar{\gamma}_t\setminus U_t}}}\matrixel{s_{\bar{\gamma}_t}}{T_{\mathcal{N}[U_t]}(U_t)}{s_{\bar{\gamma}_{t+1}}}. \notag
    \end{equation}
    The first sum is over all subsets $U$ of $\bar{\gamma}$ such that \mbox{$\mathcal{N}[U]\subseteq\bar{\gamma}$}, of which there are at most $2^{\abs{\bar{\gamma}}}$. For each of these subsets $U$, we sum over all admissible spin configurations \mbox{$s\in\Omega_{\bar{\gamma}}$} on $\bar{\gamma}$, of which there are at most $d^{\abs{\bar{\gamma}}}$. The transfer matrix \mbox{$T_{\mathcal{N}[U_t]}(U_t)$} is given by
    \begin{equation}
        T_{\mathcal{N}[U_t]}(U_t) = \sum_{\substack{S_t \subseteq E \\ \text{supp}(S_t)=U_t}}(-1)^{\abs{S_t}}\sum_{S_t' \subseteq S_t}(-1)^{\abs{S_t'}}e^{-\hat{\beta}(H_{G,\Phi}^g(\mathcal{N}[U_t])+\lambda\sum_{e \in S_t'}\Psi(e))}. \notag
    \end{equation}
    The transfer matrix may be computed in time $\exp(O(\abs{\bar{\gamma}}))$ by summing over all subsets $S_t$ of $E$ whose support is $U_t$, of which there are at most $\exp(O(\abs{\bar{\gamma}}))$, and then summing over all subsets of $S_t$, of which there are at most $\exp(O(\abs{\bar{\gamma}}))$. It then follows that the weight $w_\gamma$ of a contour $\gamma$ can be computed in time $\exp(O(\abs{\bar{\gamma}}))$, completing the proof.
\end{proof}

\section{Proof of Lemma~\ref*{lemma:TruncatedQuantumClusterExpansionApproximationAlgorithm}}
\label{section:TruncatedQuantumClusterExpansionApproximationAlgorithm}

Our proof is based on the analysis of Helmuth, Perkins, and Regts~\cite{helmuth2020algorithmic}, and Borgs et al.~\cite{borgs2020efficient}.

\TruncatedQuantumClusterExpansionApproximationAlgorithm*

\begin{proof}
    We first list all contours of size at most $n$ and order this list by level. This list can be computed in time $\exp(O(n))\cdot\abs{V(G)}^{O(1)}$ by \mbox{Lemma~\ref{lemma:ListContoursByLevel}}. We now prove that, for a contour $\gamma$ of size at most $n$, we can approximate $w_\gamma^g$ up to a multiplicative $\frac{\epsilon\abs{\operatorname{int}\bar{\gamma}}}{\abs{V(G)}}$-error in time \mbox{$\exp(O(\abs{\bar{\gamma}}))\cdot\abs{V(G)}^{O(1)}\cdot(1/\epsilon)^{O(1)}$}. We shall prove this by induction on the level $l_\gamma$ of $\gamma$.
    
    The base case in the induction is a contour $\gamma$ with no interior. In this case, $w_\gamma^g$ can be computed exactly in time $\exp(O(\abs{\bar{\gamma}}))$ by \mbox{Lemma~\ref{lemma:ContourWeightsAlgorithm}}. Now suppose that the claim holds for all contours of level at most $t$. To approximate the weight $w_\gamma^g$ of a contour $\gamma$ at level $t+1$, we first compute $w_\gamma$, which can be achieved in time $\exp(O(\abs{\bar{\gamma}}))$ by \mbox{Lemma~\ref{lemma:ContourWeightsAlgorithm}}. We now approximate \mbox{$\prod_{g'\in\Xi}Z_{\text{int}_{g'}\bar{\gamma}}^{g'}(\beta,\lambda)/Z_{\text{int}_{g'}\bar{\gamma}}^{g}(\beta,\lambda)$} by computing the truncated cluster expansions for the logarithm of the partition functions to order \mbox{$n=O(\log(\abs{V(G)}/\epsilon))$} with the weights $w_\gamma^{g'}$ replaced by multiplicative $\frac{\epsilon\abs{\operatorname{int}\bar{\gamma}}}{\abs{V(G)}}$-approximations. Note that this only requires approximating the weights of contours at level at most $t$.
    
    The truncated cluster expansions can be computed to order $n$ in time \mbox{$\exp(O(n))\cdot\abs{V(G)}^{O(1)}\cdot(1/\epsilon)^{O(1)}$} as follows. We list all clusters of size at most $n$ in time \mbox{$\exp(O(n))\cdot\abs{V(G)}^{O(1)}$} by \mbox{Lemma~\ref{lemma:ListClustersAlgorithm}}. For each of these clusters, we compute the Ursell function in time $\exp(O(n))$ by \mbox{Lemma~\ref{lemma:UrsellFunctionAlgorithm}}, and approximate the weights up to a multiplicative $\frac{\epsilon\abs{\operatorname{int}\bar{\gamma}}}{\abs{V(G)}}$-error in time \mbox{$\exp(O(n))\cdot\abs{V(G)}^{O(1)}\cdot(1/\epsilon)^{O(1)}$} by the inductive hypothesis. Hence, the truncated cluster expansions can be computed to order \mbox{$n=O(\log(\abs{V(G)}/\epsilon))$} in time \mbox{$\abs{V(G)}^{O(1)}\cdot(1/\epsilon)^{O(1)}$}. It follows from \mbox{Lemma~\ref{lemma:ConvergenceClusterExpansion}} and \mbox{Lemma~\ref{lemma:ApproximateTruncatedClusterExpansion}} that this gives a multiplicative $\frac{\epsilon\abs{\operatorname{int}\bar{\gamma}}}{\abs{V(G)}}$-approximation to \mbox{$\prod_{g'\in\Xi}Z_{\text{int}_{g'}\bar{\gamma}}^{g'}(\beta,\lambda)/Z_{\text{int}_{g'}\bar{\gamma}}^{g}(\beta,\lambda)$}. Hence, we obtain a multiplicative $\frac{\epsilon\abs{\operatorname{int}\bar{\gamma}}}{\abs{V(G)}}$-approximation to $w_\gamma^g$ in time \mbox{$\exp(O(\abs{\bar\gamma}))\cdot\abs{V(G)}^{O(1)}\cdot(1/\epsilon)^{O(1)}$}, completing the induction.
    
    The truncated cluster expansion $T_n(Z_G^g(\beta,\lambda))$ can now be approximated by computing the truncated cluster expansion with the weights $w_\gamma^{g'}$ replaced by multiplicative $\frac{\epsilon\abs{\operatorname{int}\bar{\gamma}}}{\abs{V(G)}}$-approximations. By a similar argument, this can be computed in time \mbox{$\exp(O(n))\cdot\abs{V(G)}^{O(1)}\cdot(1/\epsilon)^{O(1)}$} and gives an additive $\epsilon$-approximation to $T_n(Z_G^g(\beta,\lambda))$, completing the proof.
\end{proof}

\twocolumngrid

\bibliography{bibliography}

\end{document}